\documentclass[journal]{IEEEtran}
\usepackage{amsmath,amssymb}
\usepackage[dvips]{graphicx}
\usepackage{amsfonts}
\usepackage[mathscr]{eucal}
\usepackage{latexsym}
\usepackage{amsthm}
\usepackage{exscale}
\usepackage[mathscr]{eucal}
\usepackage{bm}
\usepackage[dvipsnames]{color}
\usepackage{cases}
\usepackage{epsfig}
\usepackage[center,small]{caption}
\usepackage{algorithm}
\usepackage{algorithmic}
\usepackage[verbose,nospace,sort]{cite}
\usepackage{tabularx}
\usepackage{multirow}
\usepackage{balance}
\usepackage{url}
\usepackage{graphicx}
\usepackage{subcaption}
\usepackage[export]{adjustbox}

\scrollmode

\newtheorem{lemma}{Lemma}

\newtheorem{example}{Example}
\newtheorem{remark}{Remark}

\hyphenpenalty=2000
\pagenumbering{gobble}

\IEEEoverridecommandlockouts

\begin{document}
\title{How Does CP Length Affect the Sensing Range for OFDM-ISAC?}
\author{Xiaoli~Xu, \emph{Member, IEEE,}, Zhiwen Zhou,  and Yong Zeng, \emph{Fellow, IEEE,}
\thanks{X. Xu, Z. Zhou and Y. Zeng are with the National Mobile Communications Research Laboratory, Southeast University, Nanjing 210096, China. Y. Zeng is also with the Pervasive Communication Research Center, Purple Mountain Laboratories, Nanjing 211111, China (email: {xiaolixu, zhiwen\_zhou, yong\_zeng}@seu.edu.cn). {\it (Corresponding author: Yong Zeng.)}}
}

\maketitle
\begin{abstract}
Orthogonal frequency division multiplexing (OFDM), which has been the dominating waveform for contemporary wireless communications, is also regarded as a competitive candidate for future integrated sensing and communication (ISAC) systems. Existing works on OFDM-ISAC usually assume that the maximum sensing range should be limited by the cyclic prefix (CP) length since  inter-symbol interference (ISI) and inter-carrier interference (ICI) should be avoided. However, in this paper, we provide rigorous analysis to reveal that the random data embedded in OFDM-ISAC signal can actually act as a free ``mask" for ISI, which makes ISI/ICI random and hence greatly attenuated after radar signal processing. The derived signal-to-interference-plus-noise ratio (SINR) in the range profile demonstrates that the maximum sensing range of OFDM-ISAC can greatly exceed the ISI-free distance that is limited by the CP length, which is validated by simulation results. To further mitigate power degradation for long-range targets, a novel sliding window sensing method is proposed, which iteratively detects and cancels short-range targets before shifting the detection window. The shifted detection window can effectively compensate the power degradation due to insufficient CP length for long-range targets. Such results provide valuable guidance for the CP length design in OFDM-ISAC systems. 
\end{abstract}

\section{Introduction}
Integrated sensing and communication (ISAC) is an advanced concept for wireless systems that aims to seamlessly integrate the functionalities of communication and sensing into a single system, so as to enhance the spectrum efficiency and reduce the operation cost \cite{9737357,XiaoJSAC}. ISAC is believed to have a wide range of applications, including the detection and tracking of unmanned aerial vehicles (UAVs)\cite{Zeng1095}, environment sensing for automotive vehicles \cite{ISAC4car}, indoor human activity monitoring \cite{10794664} and the environment-aware communications \cite{b1,zhang2024prototype}. In addition, Third Generation Partnership Project (3GPP) is actively working on defining the usecases and potential requirements for enhancing 5G systems to provide ISAC services \cite{3GPP_ISAC}.

Waveform plays a fundamental role for ISAC systems. Therefore, extensive research efforts have been devoted to investigating various new waveforms for ISAC, such as orthogonal time frequency space (OTFS) \cite{3038OTFS, GaudioOTFS}, delay-Doppler alignment modulation (DDAM) \cite{DAMISACXiao,DDAMXiaoMag}, and affine frequency division multiplexing (AFDM) \cite{BemaniAFDM}. On the other hand, orthogonal frequency-division multiplexing (OFDM), which has  been the dominant waveform for high-rate communication since the fourth generation (4G) wireless networks,  is still believed to be a very competitive candidate for future 6G networks.  OFDM demonstrates many advantages for ISAC applications, such as the flexible time-frequency resource allocation, efficient decoupled estimation of delay and Doppler \cite{4977002}, low ranging sidelobes \cite{liu2024ofdm}, and ``thumbtack-shaped" ambiguity function \cite{cao2016feasibility}. Hence, OFDM-ISAC system has attracted a lot of research interest from both academia and industry.

In OFDM communication systems, the cyclic prefix (CP) is a copy of the end portion of an OFDM symbol that is appended to the beginning of the symbol. The primary purpose of the CP is to combat the effects of multipath propagation which can cause inter-symbol interference (ISI) and inter-carrier interference (ICI). To ensure that an OFDM communication system operates effectively without suffering from ISI or ICI, it is crucial that the maximum delay spread does not exceed the CP length \cite{cheng2010ofdm}. For wireless sensing, in order to avoid missing any nearby targets, the detection window usually starts immediately after the signal is transmitted. To fully avoid the ISI and ICI,  it is usually assumed that  the  maximum sensing range of OFDM-ISAC should also be limited by $cT_{cp}/2$, where $c$ is the speed of light and $T_{cp}$ is the CP length \cite{Gaudio19}.  For example, according to the normal CP length defined in 3GPP TS 38.211 \cite{3GPP_TS_38_211}, the maximum sensing range without ISI and ICI is  only $88.44$ meters in the mmWave frequency range, which is insufficient for most outdoor applications.

To expand the sensing range, various interference suppression methods have been investigated in the literature. For example, in \cite{7968464}, the authors proposed a radar-dedicated mode for OFDM signal generation, named as repeated symbols OFDM (RS-OFDM), which only retains the first CP and sends the repeated OFDM symbols consecutively. Reference \cite{10154042} proposed to send multiple CPs for sensing purpose. Although those approaches effectively extended the OFDM sensing range, the spectral efficiency for communication has been greatly compromised. The separate sensing function for short-range and long-range targets was proposed in \cite{9565357}, where a novel pilot signal design is introduced to achieve subcarrier-wise pulse radar for long-range sensing. However, it not only compromises the communication efficiency, but also increases the complexity for estimation algorithms. Recently, a coherent compensation based ISAC signal processing method was proposed in \cite{Wang2023}, which adjusts the detection interval of each OFDM symbol based on the target range, by moving some samples at the end of each OFDM symbol to the front. However, this is only applicable if there is no target near the ISAC transmitter, because it usually incurs more severe ISI and ICI if the detection window does not include the complete OFDM symbol for signal from the short-range target.

Instead of resolving the ISI and ICI by compromising the communication performance or the sensing performance of the short-ranged targets, this paper first investigates the impact of insufficient CP length by analyzing the power of ISI and ICI in the target parameter estimation profile directly. Note that the power of ISI and ICI in the received signal has been analyzed in \cite{Wang2023}. However, the additional radar processing at ISAC receiver and their impact on ISI and ICI has not been considered there. Specifically, for target parameter estimation, the ISAC receiver will first remove the embedded data in the signal and then execute the radar estimation algorithm \cite{Braun2014}, e.g., the inverse fast Fourier transform (IFFT) for range estimation, and fast Fourier transform (FFT) for Doppler estimation. This paper reveals that the data embedded in the OFDM ISAC signal can act as a ``free" mask for ISI and ICI, especially when the data points are randomly selected from a high-order constellation. Take the range estimation as an example. After the data removal, the desired signal  adds up constructively, while the ISI and ICI becomes zero-mean random points on the complex plane. If we approximate them by the complex Gaussian random number, they will be attenuated in similar manner as the noise when we calculate the range profile using IFFT. We derive the signal-to-interference-plus-noise ratio (SINR) of the range profile for the OFDM-ISAC system with arbitrary CP length and maximum multipath delay, based on which we conclude that the impact of ISI and ICI caused by insufficient CP length is only marginal as compared with the increasing path loss with the target range. As a result, different from what is usually assumed in the literature \cite{Gaudio19}, our rigorous analysis reveals that the maximum sensing range of OFDM-ISAC should not be restricted by the ISI-free distance.

Furthermore, we propose a sliding window sensing method for CP-limited OFDM-ISAC to iteratively detect and eliminate short-range targets, enabling enhanced detection of long-range echoes by dynamically adjusting the sensing window. This approach mitigates power degradation caused by incomplete symbols, extending the effective sensing range without requiring additional CP overhead. Comparing with the coherent compensation detection proposed in \cite{Wang2023}, the proposed detection algorithm does not require the search of the compensation length. With the step length equivalent to the CP length, it ensures that all the targets are detected with maximum received power. Besides, the proposed sliding window method is applicable when the short-range and long-range targets coexist, while the coherent compensation detection in \cite{Wang2023} works only if there is no short-range target within the compensated delay.

Numerical results validate that OFDM-ISAC can achieve robust sensing far beyond the CP-limited distance, redefining practical design guidelines for ISAC systems. Furthermore, the proposed method can effectively enhance detection for long-range targets by avoiding the power degradation problem. The main contributions of this paper are summarized as follows:
\begin{itemize}
\item{{\it Paradigm Shift of the CP-limited Sensing Range}: We prove that OFDM-ISAC allows accurate sensing beyond the ISI-free range traditionally constrained by the CP length. This is enabled by the inherent randomness of OFDM communication data, which acts as a probabilistic mask to randomize the ISI and ICI, allowing their suppression during radar processing.}

\item{{\it Closed-form SINR for Range Profile}: A theoretical framework is derived to quantify the SINR in the range profile, explicitly demonstrating how sensing performance varies with CP length and target distance. This model validates that the maximum sensing range can exceed the CP-limited distance by orders of magnitude, redefining practical design guidelines for OFDM-ISAC systems.}

\item{{\it Sliding Window Sensing for Long-Range Targets}: To address power degradation for distant targets in CP-limited OFDM-ISAC systems, we propose a novel sliding window method that iteratively detects and cancels short-range targets, dynamically adjusting the sensing window to enhance long-range echo detection. This approach does not require additional CP overhead, thus preserving spectral efficiency.}
\end{itemize}

The rest of this paper is organized as follows. Section~\ref{sec:model} presents the system model of OFDM-ISAC and the signal processing procedures at the sensing receiver. Section~\ref{sec:ISIana} analyzes the power of ISI/ICI and derives the achievable SINR for sensing of targets beyond the CP limitation. The sliding window detection algorithm is proposed in Section~\ref{sec:detection}. Section~\ref{sec:numerical} verifies the analysis with numerical results. Finally, this paper is concluded in Section~\ref{sec:conclusion}.


\section{System Model}\label{sec:model}
As shown in Fig.~\ref{F:model}, we consider the OFDM-ISAC system, where the information is transmitted to the UE via OFDM modulated signal and the targets are detected from the channel information of the backscattered echoes.
Assume that the UE observes $P$ paths, which create the multipath communication channel
\begin{align}
h_U(t,\tau)=\sum_{p=1}^{P}\alpha_p^U\delta(\tau-\tau_p^U)e^{j2\pi f_{D,l}^Ut},
\end{align}
where $\alpha_p^U$, $\tau_p^U$ and $f_{D,l}^U$ are the complex-valued path gain, delay and Doppler shift of the $p$th path, respectively. $\delta(\cdot)$ is the unit impulse function. Meanwhile, assume that there are $L$ targets and scatterers that contribute to the backscattered echoes received by the BS. The multipath time-variant channel for sensing is given by
\begin{align}
h_B(t,\tau)=\sum_{l=1}^{L}\alpha_l\delta(\tau-\tau_l)e^{j2\pi f_{D,l}t}, \label{eq:channel}
\end{align}
where $\alpha_l$ , $\tau_l$ and $f_{D,l}$ are the complex-valued path gain, delay and Doppler shift of the $l$th path, respectively.  Denote the distance and radial velocity of the $l$th target by $d_l$ and $v_l$, respectively, and we have
\begin{align}
\tau_l=\frac{2d_l}{c},\quad
f_{D,l}=\frac{2v_lf_c}{c},\label{eq:tau_fD}
\end{align}
where $c$ is the light speed and $f_c$ is the carrier frequency. Without loss of generality, we assume $\tau_i\leq \tau_j$, for $i<j$.

\begin{figure}[htb]
\centering
\includegraphics[width=0.4\textwidth]{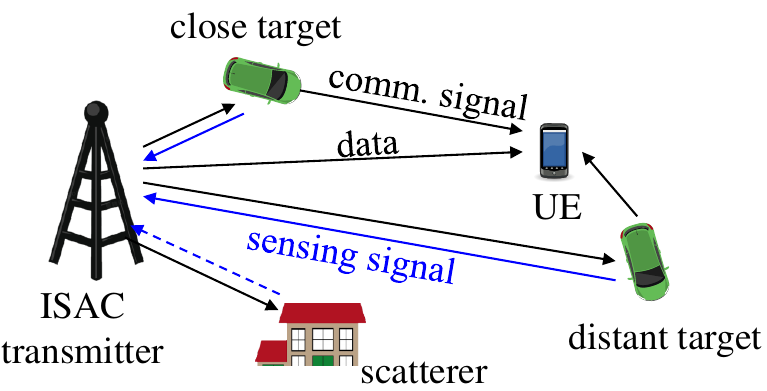}
\caption{OFDM-ISAC system model.}
\label{F:model}
\end{figure}

\subsection{OFDM Signal Model}
 Denote the OFDM subcarrier spacing by $\Delta f$, and the OFDM symbol duration by $T$. With $T=1/\Delta f$, the orthogonality between different subcarriers can be guaranteed. The baseband OFDM transmitted signal can be written as
 \begin{align}
 x(t)=\sqrt{\frac{P_{T}}{N}}\sum_{m=0}^{M-1}\sum_{k=0}^{N-1}d_{km}e^{j2\pi k\Delta f(t-mT)}\mathrm{rect}\left(\frac{t-mT}{T}\right),\label{eq:xtnoCP}
 \end{align}
 where $N$ is the  number of subcarriers, $M$ is the number of OFDM symbols within the radar coherent processing interval (CPI), $P_{T}$ is the transmit power and $d_{km}$ is the data carried by the $k$th subcarrier of the $m$th symbol with $\mathbb{E} \{ \left|d_{km} \right|^2 \} =1$.

 The transmitted signal $x(t)$ is reflected by the targets and scatterers, and the echo received by the BS is given by
 \begin{align}
 y(t)&=x(t)*h_B(t,\tau)+z(t)\nonumber\\
 &=\sum_{l=0}^{L-1}\alpha_l x(t-\tau_l) e^{j2\pi f_{D,l}t}+z(t), \label{eq:ytch}
 \end{align}
 where $h_B(t,\tau)$ is given in \eqref{eq:channel} and $z(t)$ is the additive white Gaussian noise.

 To tackle the ISI caused by the multipath channel, guard interval needs to be inserted between consecutive OFDM symbols. To further avoid ICI due to incomplete OFDM symbol, CP is inserted instead of zero guard, as shown in Fig.~\ref{F:CP0}. After including the CP, the effective OFDM symbol duration becomes $T_s=T+T_{cp}$ and the transmitted signal $x(t)$ in \eqref{eq:xtnoCP} can be modified to
 \begin{align}
 x(t)=&\sqrt{\frac{P_{T}}{N}}\sum_{m=0}^{M-1}\sum_{k=0}^{N-1}d_{km}e^{j2\pi k\Delta f(t-mT_s+T_{cp})}\\&\cdot\mathrm{rect}\left(\frac{t-mT_s+T_{cp}}{T_s}\right),\nonumber\\
 &\textnormal{   where } t\in(-T_{cp},MT_s-T_{cp}] \label{eq:xtCP}
 \end{align}
Note that the $m$th OFDM symbol spans the time duration $(-T_{cp}+mT_s,T+mT_s]$ in \eqref{eq:xtCP}.

 \begin{figure}[htb]
\centering
\includegraphics[width=0.45\textwidth]{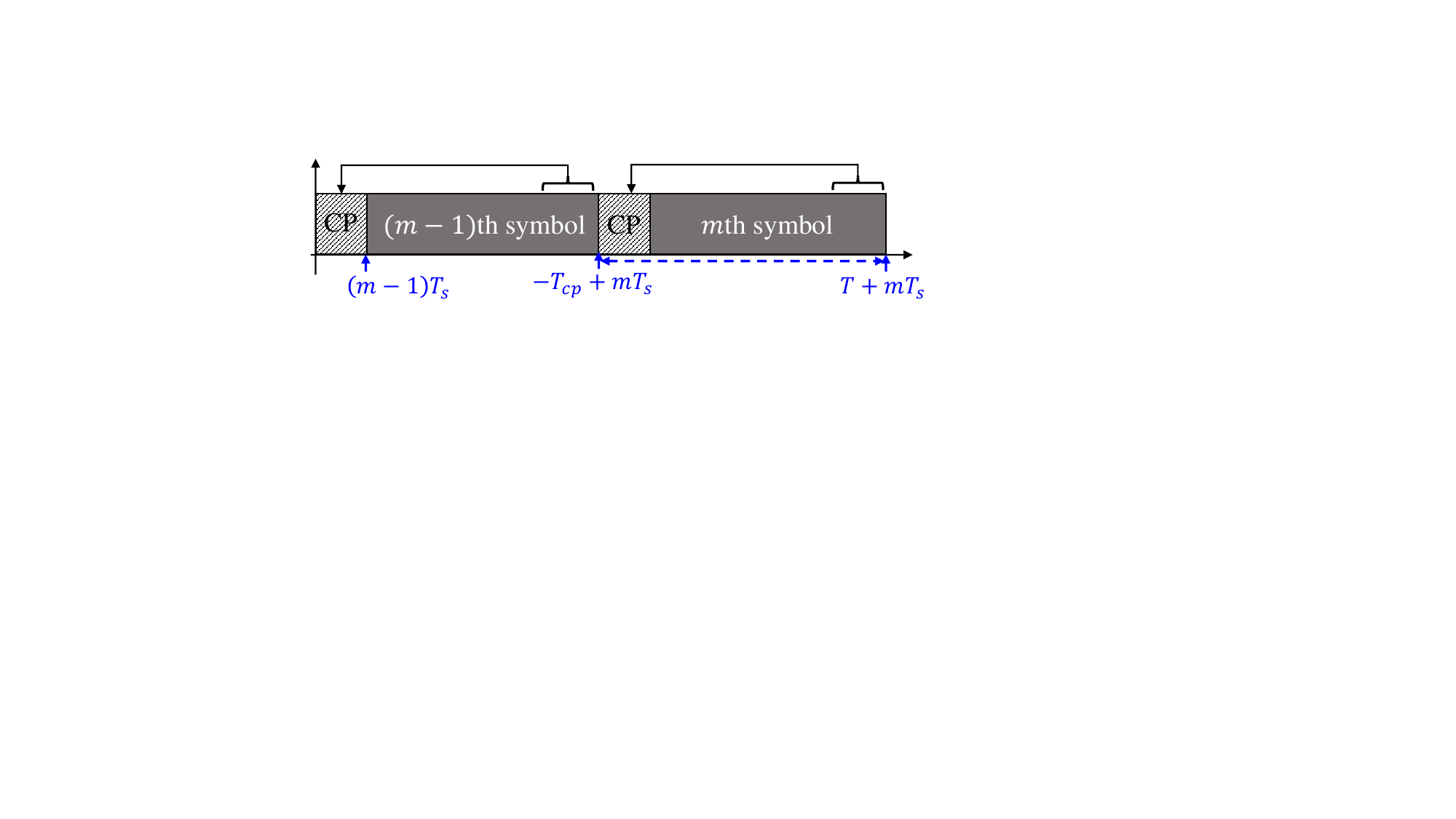}
\caption{An illustration of CP and signal transmission time.}
\label{F:CP0}
\end{figure}

The transmission of the $m$th symbol starts from its CP at $t=-T_{cp}+mT_s$, and the main symbol starts at $t=mT_s$. For communication receiver, as shown in Fig.~\ref{F:CP}(a), the  $m$th symbol in the signal through the $p$th path arrives after $\tau_p^U$, with CP starting at $-T_{cp}+mT_s+\tau_p^U$ and main period starting at $mT_s+\tau_p^U$ at the receiver. The detection of the $m$th symbol is usually synchronized to the end of the CP on the first path, i.e., from $mT_s+\tau_1^U$, as shown in Fig.~\ref{F:CP}(a). Since the previous symbol from the $P$th communication path ends at $t=-T_{cp}+mT_s+\tau_P^U$, the communication detection range does not include any ISI from previous symbol if $T_{cp}\geq \tau_P-\tau_1$.

For environment sensing, to avoid missing close-range targets,  the detection of the $m$th OFDM symbol starts from $mT_s$, as shown in Fig.~\ref{F:CP}(b). Since the previous symbol from the $L$th target ends at $t=-T_{cp}+mT_s+\tau_L$, it requires $T_{cp}>\tau_L$ to ensure that the sensing signal is ISI-free. If $\tau_l>T_{cp}$, the ISI is resulted for signal travelling through this path. Besides, for a typical path with  $\tau_l>T_{cp}$, the detection range of the $m$th symbol does not contain the complete symbol duration, and hence the ICI is resulted due to the loss of orthogonality of subcarriers. This paper investigates the impact of insufficient CP length for target sensing in OFDM-ISAC system, by analyzing the power of ISI and ICI in the sensing profiles.

 \begin{figure}[htb]
 \begin{subfigure}{0.45\textwidth}
\centering
\includegraphics[width=\textwidth]{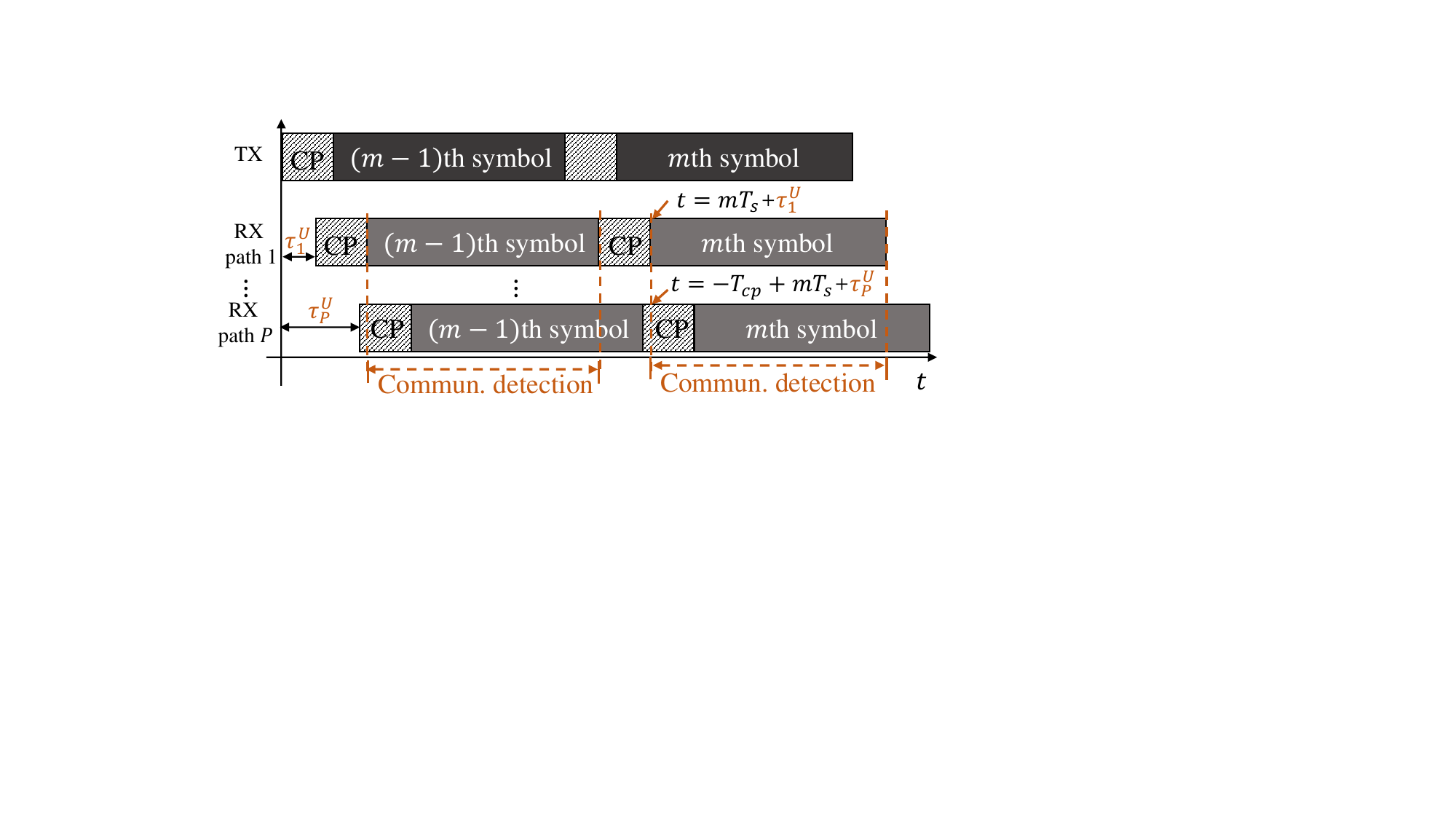}
\caption{}
\end{subfigure}
\begin{subfigure}{0.45\textwidth}
\centering
\includegraphics[width=\textwidth]{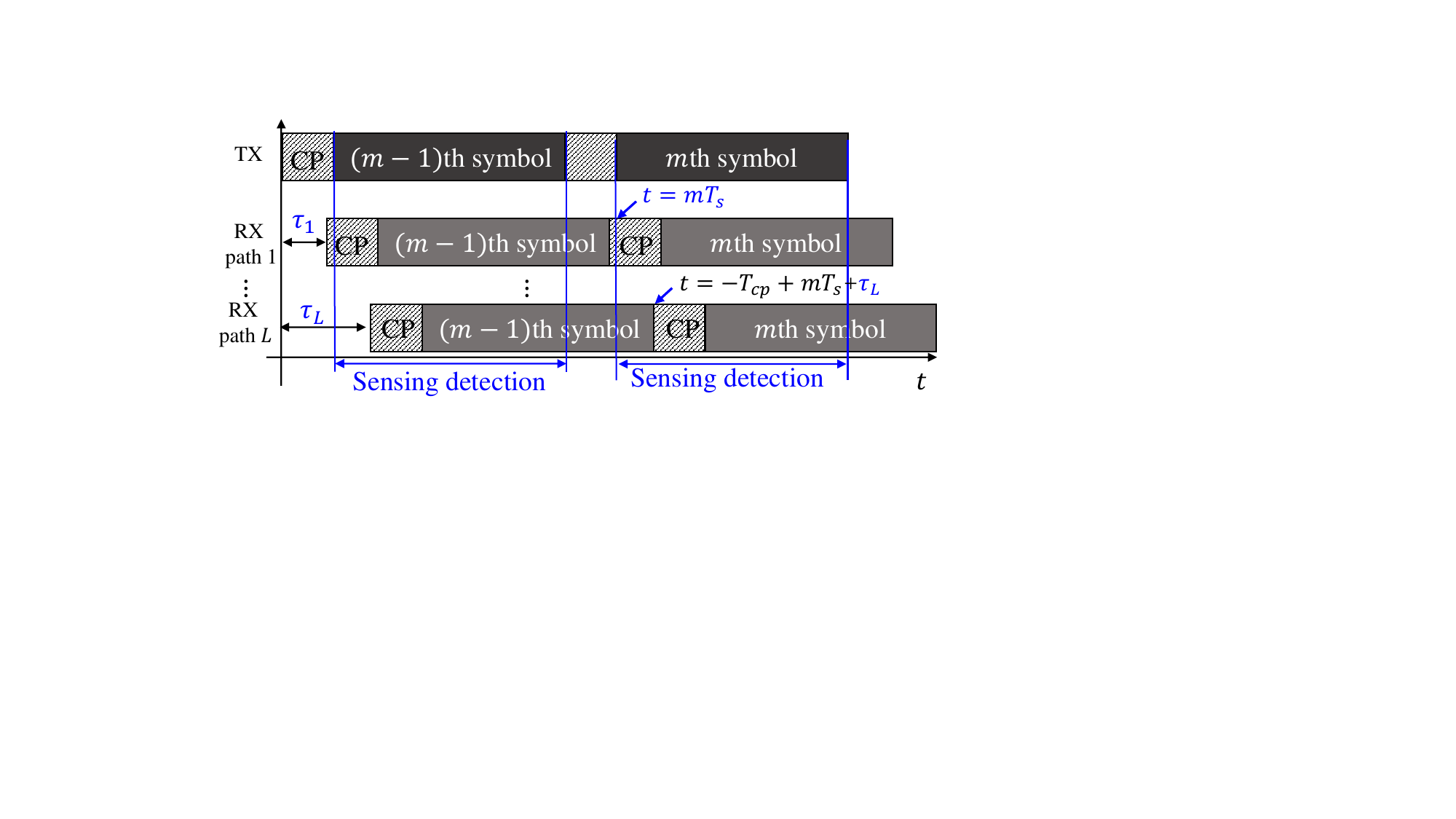}
\caption{}
\end{subfigure}
\caption{An illustration of signal detection at communication and sensing receivers.}
\label{F:CP}
\end{figure}


\subsection{Signal Processing at ISAC Receiver}
Substituting the transmitted OFDM signal \eqref{eq:xtCP} into \eqref{eq:ytch}, and considering the sensing detection duration illustrated in Fig.~\ref{F:CP}, the sensing signal received by the BS can be written as
 \begin{align}
 y(t)=&\sqrt{\frac{P_{T}}{N}}\sum_{l=1}^{L}\alpha_l\sum_{m=0}^{M-1}\sum_{k=0}^{N-1}d_{km}e^{j2\pi k\Delta f(t-mT_s+T_{cp}-\tau_l)}\nonumber\\
 &\cdot e^{j2\pi f_{D,l}(t-\tau_l)}\mathrm{rect}\left(\frac{t-mT_s+T_{cp}-\tau_l}{T_s}\right)+z(t),\label{eq:yt}
 \end{align}
 In practical scenarios, the target velocity is usually not that high and hence the Doppler shift can be assumed to be a constant within each OFDM symbol, i.e.,
\begin{align}
e^{j2\pi f_{D,l}t}\approx e^{j2\pi f_{D,l}mT_s}, t\in (-T_{cp}+mT_s,T+mT_s]. \label{eq:DopplerApprox}
\end{align}

Substitute \eqref{eq:DopplerApprox} into \eqref{eq:yt} and sample the received signal with sampling frequency $B=N\Delta f=N/T$. We have
\begin{align}
&y[n]=\sqrt{\frac{P_{T}}{N}}\sum_{l=1}^{L}\alpha_l\sum_{m=0}^{M-1}e^{j2\pi  f_{D,l}mT_s}\sum_{k=0}^{N-1}d_{km}e^{-j\frac{2\pi}{N}k\tau _lB}\nonumber\\
 &\cdot e^{j\frac{2\pi}{N}k(n-mN_s+N_{cp})}\mathrm{rect}\left(\frac{n-mN_s+N_{cp}-\tau_lB}{N_s}\right)+z[n],\label{eq:yn}
\end{align}
where $N_s=T_sB$, and $N_{cp}=T_{cp}B$ are symbol time and CP length in taps. Further denote the $l$th path delay in taps by $N_{\tau,l}$, i.e., $N_{\tau,l}=[\tau_lB]$, where $[\cdot]$ is the rounding operation. For signal along the $l$th path, the $m$th symbol spans the sample index $n=-N_{cp}+mN_s+N_{\tau,l},...,mN_s+N+N_{\tau,l}-1$. However, the sensing detection of the $m$th symbol is $n=mN_s,...,mN_s+N-1$. ISI and ICI exist if $N_{\tau,l}>N_{cp}$.

Since the delay and Doppler are usually estimated in orthogonal domains, i.e., frequency and slow-time, respectively, we focus on the delay estimation in this paper for illustrating the impact of insufficient CP length. The path delays are estimated from phase shifts of different subcarriers in each OFDM symbol. Hence, we consider the estimation from a typical symbol $y_m[n]$ and relabel the time index for the detection range as $n=0,...,N-1$ for brevity, i.e.,
 \begin{align}
y_m[n]=\sum_{l=1}^{L}\beta_{l,m}\sum_{k=0}^{N-1}d_{km}e^{j\frac{2\pi}{N}k(n+N_{cp}-\tau_lB)}+z[n],\label{eq:ymn}
\end{align}
where $\beta_{l,m}=\sqrt{\frac{P_{T}}{N}}\alpha_le^{j2\pi  f_{D,l}mT_s}$ is the effective path gain incorporating the Doppler shift and transmit power.


To estimate the range of the targets, i.e., extract the delay parameters $\{\tau_{l},l=1,...,L\}$, the sensing signal will undergo OFDM demodulation, data removal and range estimation procedures, as shown in Fig.~\ref{F:SenseProcedures}.
 \begin{figure}[htb]
\centering
\includegraphics[width=0.48\textwidth]{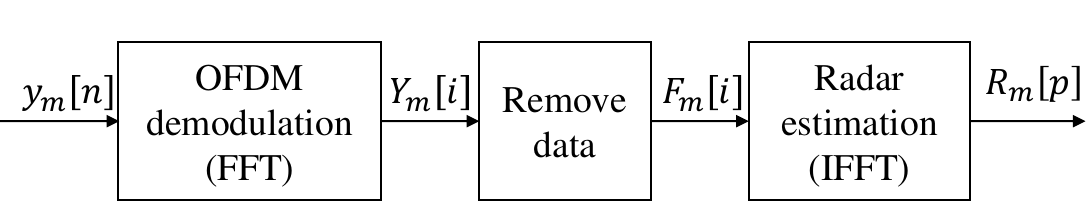}
\caption{The sensing procedures for target range estimation.}
\label{F:SenseProcedures}
\end{figure}

Under the case when $T_{cp}\geq \tau_l$, the sensing detection range includes the complete OFDM symbol along each path, and the OFDM demodulation output is given by
\begin{align}
Y_{m}[i]&=\sum_{n=0}^{N-1}y_{m}[n]e^{-j\frac{2\pi}{N}ni}\nonumber\\
&=Nd_{im}\sum_{l=1}^{L}\beta_{l,m}e^{-j\frac{2\pi}{N}i\tau_lB}+Z[i].\label{eq:Ymi}
\end{align}
The data embedded can be removed as
\begin{align}
F_{m}[i]=\frac{Y_{m}[i]}{d_{im}}=N\sum_{l=1}^{L}\beta_{l,m}e^{-2\pi i\Delta f\tau_l}+\frac{Z[i]}{d_{i,m}}.
\end{align}

Then, we perform IFFT on ${F_{m}[i],i=0,...,N-1}$ and obtain
\begin{align}
&R_m[p]=\frac{1}{N}\sum_{i=0}^{N-1}F_{m}[i]e^{j\frac{2\pi}{N}ip}\nonumber\\
&=\sum_{l=1}^{L}\beta_{l,m}e^{j2\pi\frac{N-1}{N}(p-\tau_lB)}\frac{\sin\left(\pi \left(p-\tau_lB\right)\right)}{\sin\left(\frac{\pi}{N}\left(p-\tau_lB\right)\right)}+\tilde{Z}[p].\label{eq:Rm}
\end{align}
The range profile estimated from the $m$th OFDM symbol is given by $\{|R_m[p]|^2,p=0,...,N-1\}$, and the peaks give the estimation of the delay introduced by targets. Specifically, when $\hat{p}\rightarrow \tau_lB$, we have
\begin{align}
\lim_{p\rightarrow \tau_lB}|R_m[p]|=N|\beta_{l,m}|=\sqrt{NP_{T}}|\alpha_l|.\label{eq:procGain}
\end{align}

When the estimations of $M$ OFDM symbols are summed up, we obtain the overall range-Doppler profile as
 \begin{align}
 \mathcal{R}[p,q]=\frac{1}{M}\left|\sum_{m=0}^{M-1}R_{m}[p]e^{-j\frac{2\pi}{M}{mq}}\right|^2. \label{eq:RangeProfile}
 \end{align}

Denote the $l$th peak of $\mathcal{R}[p,q]$ by $\hat{p}_l$ and $\hat{q}_l$ on the range and Doppler dimension respectively. We can estimate the delay and Doppler of the target as
\begin{align}
\hat{\tau}_l=\frac{\hat{p}_l}{B},\quad \hat{f}_{D,l}=\frac{\hat{q}_l}{MT_s}. \label{eq:Tauest}
\end{align}

Together with the relationship shown in \eqref{eq:tau_fD}, the estimated range and velocity are
\begin{align}
\hat{d}_{l}=\frac{\hat{p}_l c}{2B} , \quad \hat{v}_l=\frac{\hat{q}_lc}{2MT_sf_c}. \label{eq:dest}
\end{align}

\subsection{OFDM Radar Sensing Range}
It is observed from \eqref{eq:dest} that the maximum unambiguous range estimated using OFDM radar is
\begin{align}
d_{\mathrm{un}}^{\max}=\frac{(N-1)c_0}{2B}\approx \frac{c_0}{2\Delta f}.
\end{align}

However, to ensure that there is no ISI or ICI in the sensing signal, we have assumed that $N_{cp}\geq N_{\tau,L}$, which imposes a much stronger constraint on the maximum sensing range, i.e.,
\begin{align}
d\leq d_{\mathrm{cp}}^{\max}\triangleq\frac{c_0T_{cp}}{2}. \label{eq:CPconstriant}
\end{align}

Since $T_{cp}\ll T$, we have $d_{\mathrm{cp}}^{\max}\ll d_{\mathrm{un}}^{\max}$. This paper aims to figure out whether the stringent constraint in \eqref{eq:CPconstriant} is necessary, by analyzing the SINR in the range profile when $N_{cp}<N_{\tau,L}$.

\section{Interference Beyond CP Protection} \label{sec:ISIana}
When the CP length is insufficient, it leads to both ISI and ICI. Fig.~\ref{F:ISI} shows the decoding window on the signal from a typical path with $\tau_l>T_{cp}$. The detection window of the $m$th symbol starts from $mT_s$ and ends at $mT_s+T$. However, the actual span of the $m$th symbol on this path starts from $-T_{cp}+mT_s+\tau_l$.  The ISI comes from the inclusion of the previous symbol (red portion). The ICI comes from the incomplete decoding of the symbol (missing yellow portion), and hence the orthogonality among subcarriers can no longer be preserved. In the following analysis, we consider a typical path with $T_{cp}<\tau_l<T$ and hence the path index $l$ will be dropped for convenience. Besides, the noise term is dropped for brevity since it does not affect the analysis of ISI and ICI.
 \begin{figure}[htb]
\centering
\includegraphics[width=0.48\textwidth]{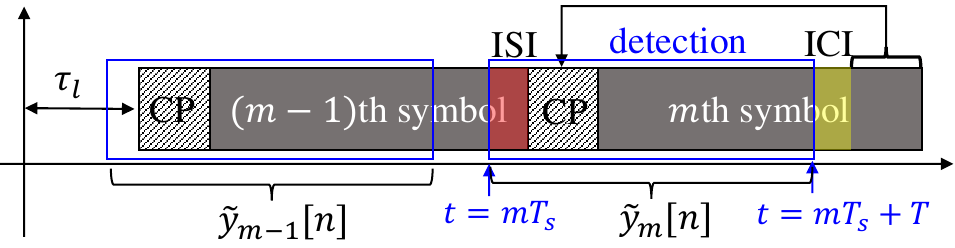}
\caption{An illustration of ISI and ICI when $\tau_l>T_{cp}$. The red portion shows the included part of previous symbol and the yellow portion shows the missing part in current symbol.}
\label{F:ISI}
\end{figure}

 As shown in Fig.~\ref{F:ISI}, the detected signal for the $m$th symbol along this path is given by \eqref{eq:ymnISI}. Compared \eqref{eq:ymnISI} with \eqref{eq:ymn}, it is observed the ISI from the previous symbol is reflected in $\{\tilde{y}_m[n], n=0,...,N_{\tau}-N_{cp}-1\}$ and the ICI is due to the missing of $\{y_m[n],n=N-N_{\tau},...,N-N_{cp}\}$.

\begin{align}
\tilde{y}_m[n]=\begin{cases}
\beta_{m-1}\sum_{k=0}^{N-1}d_{k(m-1)}e^{j\frac{2\pi}{N}k(n+N+N_{cp}-N_{\tau})},\\  \quad \quad \quad \quad \textnormal{for }n=0,...,N_{\tau}-N_{cp}-1\\
\beta_{m}\sum_{k=0}^{N-1}d_{km}e^{j\frac{2\pi}{N}k(n-N_{\tau})}, \\ \quad \quad \quad \quad \textnormal{for } n=N_{\tau}-N_{cp},...,N-1
\end{cases}\label{eq:ymnISI}
\end{align}

In this section, we will analyze the power of ICI and ISI after the sensing process shown in Fig.~\ref{F:SenseProcedures} and derive the SINR in the output range profile. Since $\beta_{m-1}$ and $\beta_{m}$ are complex channel gain with the same amplitude (equivalent to the path gain), they will be dropped  in the ISI and ICI power analysis. The impact of path gain will be considered in the analysis of final SINR.


\subsection{ISI and ICI Analysis}
First, after OFDM demodulation on $\tilde{y}_{m}[n]$, we have
\begin{align}
&\tilde{Y}_{m}[i]=\sum_{n=0}^{N-1}\tilde{y}_{m}[n]e^{-j\frac{2\pi}{N}ni}\nonumber\\
&=\underbrace{(N-N_{\tau}+N_{cp})d_{im}e^{-j\frac{2\pi}{N}iN_{\tau,l}}}_{\textnormal{useful signal}}\nonumber\\
&\underbrace{-\sum_{k=0,k\neq i}^{N-1}d_{km}e^{-j\frac{2\pi}{N}kN_{\tau}}\frac{1-e^{-j\frac{2\pi}{N}(k-i)(N_{\tau}-N_{cp})}}{1-e^{j\frac{2\pi}{N}(k-i)}}}_{I_c[i]}\nonumber\\
&+\underbrace{\sum_{n=0}^{N_{\tau}-N_{cp}-1}\sum_{k=0}^{N-1}d_{k(m-1)}e^{j\frac{2\pi}{N}(n+N+N_{cp}-N_{\tau})e^{-j\frac{2\pi}{N}in}}}_{I_s[i]}.\label{eq:YmiISI}
\end{align}
Comparing \eqref{eq:YmiISI} with \eqref{eq:Ymi}, we observe that the power of useful signal is reduced and the ICI and ISI appear when $N_{\tau}>N_{cp}$, denoted by $I_c[i]$ and $I_s[i]$, respectively. Assume that the interference-free demodulated signal $Y_m[i]$ is normalized to unit power. The power of useful signal, ISI and ICI in $\tilde{Y}_m[i]$ has been analyzed in \cite{Wang2023}, which are given by
\begin{align}
P_u&=\left(1-\frac{N_{\tau}-N_{cp}}{N}\right)^2, \label{eq:Pu}\\
P_{ICI}&=\left(1-\frac{N_{\tau}-N_{cp}}{N}\right)\left(\frac{N_{\tau}-N_{cp}}{N}\right),\label{eq:PICI}\\
P_{ISI}&=\frac{N_{\tau}-N_{cp}}{N}.\label{eq:PISI}
\end{align}
It is clear to see from \eqref{eq:Pu}-\eqref{eq:PISI} that the useful signal power reduces, and the interference power increases with gap between $N_{\tau}$ and $N_{cp}$. However, to understand the impact on the target sensing, we need to further consider the data removal and radar estimation process.

Following the sensing procedures shown in Fig.~\ref{F:SenseProcedures}, the next step is to remove the data, which renders
\begin{align}
\tilde{F}_{m}[i]&=(N-N_{\tau}+N_{cp})e^{-j\frac{2\pi}{N}iN_{\tau,l}}+\frac{I_c[i]}{d_{im}}+\frac{I_s[i]}{d_{im}}.\label{eq:FmiISI}
\end{align}
Let $I_c'[i]=I_c[i]/d_{im}$ and $I_s'[i]=I_s[i]/d_{im}$.  Substituting the expression of $I_c[i]$ and $I_s[i]$ in \eqref{eq:YmiISI}, we have
\begin{align}
I_c'[i]&=-\sum_{k=0,k\neq i}^{N-1}\frac{d_{km}}{d_{im}}e^{-j\frac{2\pi}{N}kN_{\tau}}\frac{1-e^{-j\frac{2\pi}{N}(k-i)(N_{\tau}-N_{cp})}}{1-e^{j\frac{2\pi}{N}(k-i)}}\\
I_s'[i]&=\sum_{n=0}^{N_{\tau}-N_{cp}-1}\sum_{k=0}^{N-1}\frac{d_{k(m-1)}}{d_{im}}e^{j\frac{2\pi}{N}(n+N+N_{cp}-N_{\tau})e^{-j\frac{2\pi}{N}in}}
\end{align}
Since the data embedded in the ISAC signal is random, e.g., random quadrature amplitude modulation (QAM) constellation points, the distribution of $I_c'[i]$ and $I_s'[i]$ also appears to be random. A typical distribution of $I_c'[i]$ and $I_s'[i]$ on the constellation diagram is shown in Fig.~\ref{F:constellation}.
\begin{figure}[htb]
\centering
\begin{subfigure}{0.23\textwidth}
\centering
\includegraphics[width=\textwidth]{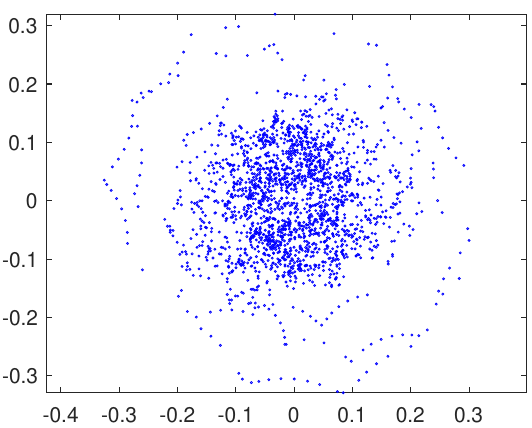}
\caption{$I_c'[i]$ with BPSK}
\end{subfigure}
\begin{subfigure}{0.23\textwidth}
\centering
\includegraphics[width=\textwidth]{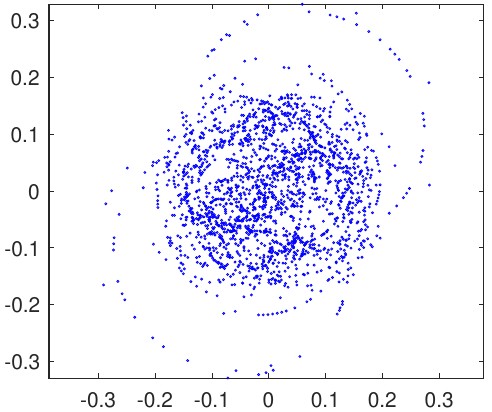}
\caption{$I_s'[i]$ with BPSK}
\end{subfigure}
\begin{subfigure}{0.23\textwidth}
\centering
\includegraphics[width=\textwidth]{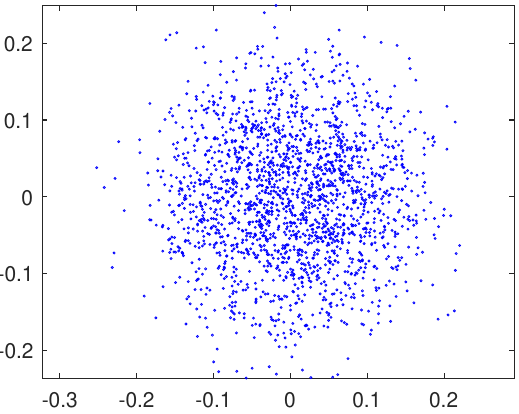}
\caption{$I_c'[i]$ with QPSK}
\end{subfigure}
\begin{subfigure}{0.23\textwidth}
\centering
\includegraphics[width=\textwidth]{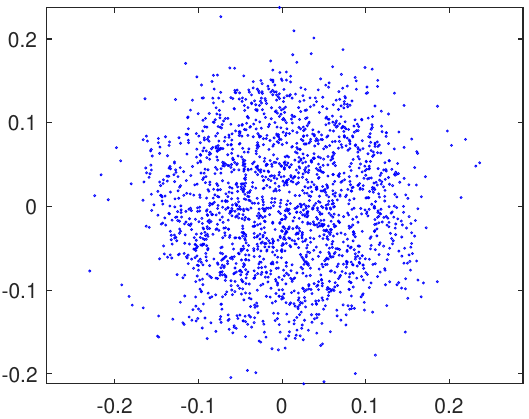}
\caption{$I_s'[i]$ with QPSK}
\end{subfigure}
\caption{An illustration ISI and ICI distribution after data removal, with $N=2048$, $N_{\tau}=174$, $N_{cp}=145$ and the data being random.}
\label{F:constellation}
\end{figure}

It is observed from Fig.~\ref{F:constellation} that the ICI and ISI after data removal tends to be random points on the constellation plane centered at zero, and the randomness increases with the data constellation size. Considering the power of ICI and ISI given in \eqref{eq:PICI} and \eqref{eq:PISI}, we make the following remark:
\begin{remark}
In OFDM-ISAC, the ISI and ICI caused by insufficient CP length can be approximated by complex Gaussian random number with zero mean and variance being $P_{ICI}$ and $P_{ISI}$, respectively, i.e., $I_c'[i]\sim\mathcal{CN}(0,P_{ICI})$ and $I_s'[i]\sim\mathcal{CN}(0,P_{ISI})$.
\end{remark}

Next, we consider the impact of ICI and ISI on the range profile. After radar estimation procedure, we have
\begin{align}
\tilde{R}_{m}[p]&=\frac{N-N_{\tau}+N_{cp}}{N}e^{j\frac{2\pi}{N}(N-1)(p-N_{\tau})}\frac{\sin\left(\pi \left(p-N_{\tau}\right)\right)}{\sin\left(\frac{\pi}{N}\left(p-N_{\tau}\right)\right)}\nonumber\\
&+\underbrace{\frac{1}{N}\sum_{i=0}^{N-1}I_c'[i]e^{j\frac{2\pi}{N}ip}}_{\tilde{I}_c[p]}
+\underbrace{\frac{1}{N}\sum_{i=0}^{N-1}I_s'[i]e^{j\frac{2\pi}{N}ip}}_{\tilde{I}_s[p]}, \label{eq:Rtilde}
\end{align}
where the ISI and ICI in the range estimation output are denoted by $\tilde{I}_c[p]$ and $\tilde{I}_s[p]$, respectively.
\begin{lemma}\label{lem:Gaussin}
If $x_n\sim\mathcal{CN}(0,\sigma^2),n=0,...,N-1$ and $X_k=\frac{1}{N}\sum_{n=0}^{N-1}x_ne^{j\frac{2\pi}{N}nk}$, then $X_k\sim\mathcal{CN}(0,\frac{\sigma^2}{N})$.
\end{lemma}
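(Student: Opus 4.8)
The plan is to exploit the fact that $X_k$ is a fixed deterministic linear combination of the jointly Gaussian variables $\{x_n\}$, so that $X_k$ is automatically complex Gaussian, and then to pin down its distribution by computing only its first two moments together with its pseudo-variance. First I would invoke the standard fact that any deterministic linear combination of jointly circularly symmetric complex Gaussian random variables is again circularly symmetric complex Gaussian; this reduces the claim to identifying the mean $\mathbb{E}[X_k]$, the variance $\mathbb{E}[|X_k|^2]$, and the pseudo-variance $\mathbb{E}[X_k^2]$.

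For the mean, linearity of expectation gives $\mathbb{E}[X_k] = \frac{1}{N}\sum_{n=0}^{N-1}\mathbb{E}[x_n]e^{j\frac{2\pi}{N}nk} = 0$, since each $x_n$ is zero-mean. For the variance, I would expand $\mathbb{E}[|X_k|^2] = \frac{1}{N^2}\sum_{n=0}^{N-1}\sum_{n'=0}^{N-1}\mathbb{E}[x_n \bar{x}_{n'}]e^{j\frac{2\pi}{N}(n-n')k}$ and use the independence (and common variance $\sigma^2$) of the $x_n$, which makes $\mathbb{E}[x_n \bar{x}_{n'}] = \sigma^2\delta_{nn'}$. The off-diagonal cross terms vanish and the surviving diagonal collapses the double sum to $\frac{1}{N^2}\cdot N\sigma^2 = \frac{\sigma^2}{N}$, which is exactly the asserted variance.

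The one point deserving care---and the nearest thing to an obstacle---is confirming that $X_k$ is a \emph{proper} (circularly symmetric) complex Gaussian, rather than merely zero-mean with the right variance. For this I would verify that the pseudo-variance vanishes: $\mathbb{E}[X_k^2] = \frac{1}{N^2}\sum_{n=0}^{N-1}\sum_{n'=0}^{N-1}\mathbb{E}[x_n x_{n'}]e^{j\frac{2\pi}{N}(n+n')k}$, which is identically zero because $\mathbb{E}[x_n x_{n'}] = 0$ for every pair $(n,n')$, a defining property of circularly symmetric complex Gaussians; note in particular that even the diagonal term vanishes since $\mathbb{E}[x_n^2] = 0$ (no conjugate). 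Combining the three computations, $X_k$ is zero-mean, circularly symmetric, and of variance $\sigma^2/N$, i.e. $X_k\sim\mathcal{CN}(0,\sigma^2/N)$. I would also flag explicitly that the statement implicitly requires the $x_n$ to be mutually independent (equivalently, uncorrelated in the proper complex sense); without this the off-diagonal covariance terms would survive and alter the variance formula. This independence is precisely what is supplied by Remark~1, which models the per-subcarrier ISI/ICI terms $I_c'[i]$ and $I_s'[i]$ as i.i.d.\ complex Gaussian, so the lemma applies directly to the IFFT outputs $\tilde{I}_c[p]$ and $\tilde{I}_s[p]$.
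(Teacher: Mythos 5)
Your proof is correct and follows essentially the same route as the paper's: the paper absorbs the unit-modulus phase factors by rotational invariance of $\mathcal{CN}(0,\sigma^2)$ and then invokes closure of independent Gaussians under summation to get variance $\sigma^2/N$, while you reach the same conclusion by directly computing the mean, variance, and pseudo-variance. Your version is somewhat more careful than the paper's in two respects that are worth keeping: you make explicit that mutual independence (or proper uncorrelatedness) of the $x_n$ is needed for the off-diagonal terms to vanish, and you verify the vanishing pseudo-variance so that $X_k$ is genuinely circularly symmetric rather than merely zero-mean with the right variance.
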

\begin{proof}
Since the multiplication with $e^{j\frac{2\pi}{N}nk}$ just introduces phase shift, it does not affect the distribution of data, hence we have $x'_n=x_ne^{j\frac{2\pi}{N}nk}\sim\mathcal{CN}(0,\sigma^2)$. Then, according to the property of Gaussian distribution, we have $X_k\sim\mathcal{CN}(0,\frac{\sigma^2}{N})$.
\end{proof}

Based on Lemma~\ref{lem:Gaussin}, we conclude that the ISI and ICI in the range profile can be approximated by random noise following the statistic distributions as $\tilde{I}_c[p]\sim\mathcal{CN}(0,\frac{P_{ICI}}{N})$ and $\tilde{I}_s[p]\sim\mathcal{CN}(0,\frac{P_{ISI}}{N})$, respectively.

\subsection{Interference from All Targets}
When there are multiple targets, the received signal composes the reflections from all the targets. Thanks to the linearity of IFFT operation, the range profile can also be written as the summation of all the targets. Specifically, denote the normalized response of the $l$th target in range profile by $\tilde{R}_{m,l}[p]$. Based on the above analysis and considering the signal strength, we have
\begin{align}
R_{l,m}[p]&=\beta_{l,m}\frac{N-(N_{\tau,l}-N_{cp})^+}{N}\frac{\sin\left(\pi \left(p-N_{\tau,l}\right)\right)}{\sin\left(\frac{\pi}{N}\left(p-N_{\tau,l}\right)\right)}\nonumber\\
&\cdot e^{j\frac{2\pi}{N}(N-1)(p-N_{\tau,l})}+I_l[p],\label{eq:Rlmp}
\end{align}
 where $(N_{\tau}-N_{cp})^+=\max\{0,N_{\tau}-N_{cp}\}$ and ${I}_{l}[p]\sim\mathcal{CN}(0, \frac{P_{ICI}+P_{ISI}}{N}|\beta_{l,m}|^2)$  models the ICI and ISI from the $l$th path. The overall range profile for the $L$ targets can be written as
\begin{align}
{R}_{m}[p]&=\sum_{l=1}^{L}\tilde{R}_{l,m}[p]. \label{eq:totalRm}
\end{align}

Considering the detection of the $l$th target, the signal reflected from other targets, regardless useful signal part, ISI or ICI, contribute to inter-target-interference (ITI). The peak of the $l$th target in the range profile is located at $\hat{p}_l\approx \tau_lB$. The ITI can be measured by the response of other targets near the peak for the $l$th target, i.e.,
\begin{align}
P_{ITI}=\left|\sum_{l'\neq l}\tilde{R}_{m,l'}[\hat{p}_l]\right|^2.
\end{align}
For ease of analysis, we consider the interference from target $l'$, i.e., $R_{m,l'}[\hat{p}_l]$, which has power
\begin{align}
\mathbb{E}[|R_{m,l'}[\hat{p}_l]|^2]&=|\beta_{l',m}|^2P_u\mathbb{E}\left[\left|f(\hat{p}_l-N_{\tau,l'})\right|^2\right]\nonumber\\
&+|\beta_{l',m}|^2\frac{P_{ICI}+P_{ISI}}{N}.
\end{align}
where  $f(x)\triangleq \frac{\sin\left(\pi x\right)}{\sin\left(\frac{\pi}{N}x\right)}$ can be approximated by an impulse function when $N$ is large. Hence, if two targets are well separated in delay domain, i.e., $|\tau_l-\tau_{l'}|\gg \frac{1}{B}$, only the ICI and ISI from the $l'$th target reflected signal contributes to the noise for estimating the $l$th target.




\section{OFDM Radar Sensing Range}
To find out whether the maximum sensing range need to be limited to the ISI-free range in \eqref{eq:CPconstriant}, we need to analyze the SINR at the range profile, and compare it with the detection threshold. Consider a typical target with distance $d$ from the BS and radar cross-section (RCS) $\kappa$, the reflected echo from this target has the power
\begin{align}
P_R(d,\kappa)=\kappa\frac{\lambda^2}{(4\pi)^3d^4}G_{\mathrm{T}}G_{\mathrm{R}}P_{T},\label{eq:Prd}
\end{align}
where  $\lambda$ is the wavelength, $G_T$ and $G_R$ and transmit and receive antenna gain at BS. When $d\leq d_{\mathrm{cp}}^{\max}$, the ISI and ICI are absent. Further consider the radar processing gain in \eqref{eq:procGain}, the signal-to-noise ratio (SNR) in \eqref{eq:Rtilde} is
\begin{align}
\gamma_1(d)=\frac{NP_R(d,\kappa)}{N_0B}=\frac{P_R(d,\kappa)}{N_0\Delta f}, \label{eq:SNR}
\end{align}
where $N_0=k_BT_{\mathrm{temp}}\mathcal{N}_F$ with $k_B$ being the Boltzmann constant, $T_{\mathrm{temp}}$ being the temperature and $\mathcal{N}_F$ being the noise figure.

\subsection{SINR for Sensing of Single Target }
We first consider the scenario when there is only one target in the environment, and hence the interference in the range profile only contains the ICI and ISI caused by insufficient CP length, as in \eqref{eq:Rtilde}. When $d>d_{\mathrm{cp}}^{\max}$, based on the analysis in the preceding subsection, the useful signal is attenuated by the factor $P_u$, and the ICI and ISI have the power $\frac{P_{ICI}P_R(d,\kappa)}{N}$ and $\frac{P_{ISI}P_R(d,\kappa)}{N}$. Hence, the received SINR is
 \begin{align}
 \gamma_{2}(d)&=\frac{NP_uP_R(d,\kappa)}{N_0\cdot N\Delta f+\frac{P_{ICI}P_{R}}{N}+\frac{P_{ISI}P_R(d,\kappa)}{N}}\nonumber\\
 &=\frac{\left(1-\frac{N_{\tau}-N_{cp}}{N}\right)^2}{\frac{N_0\Delta f}{P_R(d,\kappa)}+\frac{N_{\tau}-N_{cp}}{N^2}\left(2-\frac{N_{\tau}-N_{cp}}{N}\right)}.\label{eq:SINR}
 \end{align}

 The received SINR presented in \eqref{eq:SNR} and \eqref{eq:SINR} only considers the sensing using a single OFDM symbol. Where the range profile of multiple OFDM symbols are summed up coherently, the ISI and ICI will be attenuated in a similar manner as the noise, thank to their randomness nature. This introduces another radar processing gain $M$. Hence, the overall SINR in the range profile is
 \begin{align}
 \gamma(d)=\frac{M\left(1-\frac{(N_{\tau}-N_{cp})^+}{N}\right)^2}{\frac{1}{\gamma_1(d)}+\frac{(N_{\tau}-N_{cp})^+}{N^2}\left(2-\frac{N_{\tau}-N_{cp}}{N}\right)}.\label{eq:SINRsum}
 \end{align}

\begin{example}
Consider an OFDM-ISAC system with settings described in Table~\ref{tb:para}, based on which we can calculate the ISI free distance $d_{cp}^{\max}=88.44$m. Fig.~\ref{F:SINR} compares the SINR calculated from \eqref{eq:SINRsum} for a fixed CP length, with the case when CP length is always made to be sufficiently large. It is observed that the impact of ICI and ISI is relatively small as compared with the radar processing gain. For example, at $d=1000$m, we have $N_{\tau}=1640$. To fully eliminate the ICI and ISI, it requires CP length $N_{cp}=N_{\tau}$, which leaves very few resource elements for data communication. If we still choose $N_{cp}=145$ which is less than $10\%$ of $N_{\tau}$, the degradation of SINR is about 12dB, and this degradation can be effectively compensated by more complicated signal processing algorithms, such as the proposed method presented in in Section~\ref{sec:detection}. Besides, with the fixed radar resource, the degradation of SINR is only marginal when $N_{cp}$ is slightly less than $N_{\tau}$.
 \begin{figure}[htb]
\centering
\includegraphics[width=0.43\textwidth]{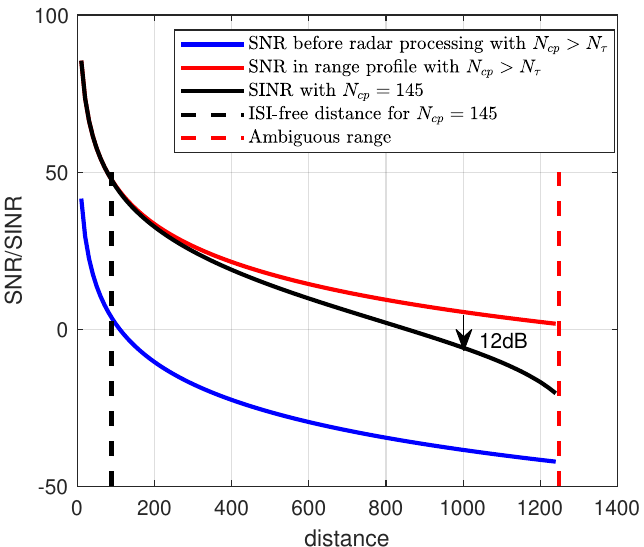}
\caption{A comparison of SINR for OFDM-ISAC with and without sufficient CP length. The parameter setting follows Table~\ref{tb:para}.}
\label{F:SINR}
\end{figure}
\end{example}

The derived SINR in \eqref{eq:SINRsum} indicates that the impact of CP-limitation on target sensing can be measured by the factor $\frac{(N_{\tau}-N_{cp})^+}{N}$, since the signal power degradation and the ICI/ISI power are both proportional to it. Consider that $N_{\tau}\approx \tau B$, $N_{cp}=T_{cp}B$ and $N=TB$, we have $\frac{(N_{\tau}-N_{cp})^+}{N}=\frac{(\tau-T_{cp})^{+}}{T}$. It is well known that increasing the number of subcarriers $N$, or equivalently the bandwidth $B=N\Delta f=N/T$, will improve the sensing range resolution. However, in terms of SINR, with fixed symbol duration and CP length, increasing the bandwidth only attenuates the impact of ISI/ICI, with the SINR upper bounded by
\begin{align}
 \bar{\gamma}(d)=\gamma_1(d)M\left(1-\frac{(\tau-T_{cp})^+}{T}\right)^2.\label{eq:SINRupper}
\end{align}

\subsection{SINR for Sensing of Multiple Targets }
When there are multiple targets, the estimation of a typical target is not only affected by the ISI and ICI, but also by the signal reflected from other targets. Based on the analysis in the preceding section, the interference from targets that are within the ISI-free range can be neglected due the ``thumbtack-shaped" ambiguity function. For those targets outside the ISI-free range, the ISI and ICI in their reflected signal may contribute to the noise level of the overall range profile.

Let $P_R(d_l,\kappa_l)$ be the power of signal reflected by the $l$th target. The total ICI and ISI interference is
\begin{align}
I_{t}=\sum_{l=1}^{L}\frac{N_{\tau,l}-N_{cp}}{N}\left(2-\frac{N_{\tau,l}-N_{cp}}{N}\right)P_R(d_l,\kappa_l),\label{eq:Itotal}
\end{align}
and the overall SINR in \eqref{eq:SINRsum} can be modified to
 \begin{align}
 \gamma(d)=\frac{M\left(1-\frac{(N_{\tau}-N_{cp})^+}{N}\right)^2}{\frac{1}{\gamma_1(d)}+\frac{I_t}{N}}.\label{eq:SINRsum2}
 \end{align}



Based on the analysis in the preceding subsections, we note that the ISI and ICI due to insufficient CP length can be effectively attenuated by the radar signal processing in OFDM-ISAC. This is mainly thanks to the randomness introduced by data masking. In practical implementation, the maximum sensing range should be set based on the required sensing accuracy, which is related to the SINR in the range profile given in \eqref{eq:SINRsum2}.

\section{Sliding Window Sensing Detection for CP-limited OFDM-ISAC}\label{sec:detection}
Based on the analytical SINR in \eqref{eq:SINRsum2}, the amount of ISI and ICI caused by insufficient CP length can be effectively attenuated by radar processing. However, the degradation of the signal power due to the incomplete OFDM symbols, which is proportional to $(N_{\tau}-N_{cp})/N$, could be substantial for long-ranged targets. Postponing the sensing detection window in Fig.~\ref{F:CP}(b) is beneficial for enhancing the power of long-ranged targets. However, the orthogonality of the OFDM signal reflected from the short-ranged targets will be ruined and hence lead to severe ICI/ISI, which may overwhelm the weak reflected signal by the long-ranged targets.

To enhance the sensing performance of the long-ranged targets, we propose a sliding window sensing method, which detects the short-ranged targets first, and reconstructs the corresponding communication paths and received signal component. Then, the reconstructed signal is eliminated from the received signal, and the sensing window is moved beyond the detected targets in order to compensate the power loss of the long-ranged targets. The procedures are elaborated in details in the following subsections.

\subsection{Detection and Elimination of Short-ranged Targets}
The range profile in \eqref{eq:totalRm} contains the sensing information for all the $L$ targets, among which the short-ranged targets appear as the peaks of smaller indices, i.e., $\hat{p}_{1}<\hat{p}_{2}<\cdots<\hat{p}_{L}$. Besides, the heights of the peaks corresponding to short ranged targets are usually higher than those corresponding to the long-ranged targets. The reasons are two-folds:
\begin{itemize}
\item{The signals reflected by the short-ranged targets travel shorter distance, and hence experience less path loss, i.e., the received power $P_{R}(d,\kappa)$ in \eqref{eq:Prd} is larger for targets with smaller distance and the same RCS. }
\item{The power degradation due to insufficient CP length is less severe for short-ranged targets, as shown in \eqref{eq:SINRsum2}.  }
\end{itemize}
Hence, the short-ranged targets can be easily detected by identifying the high peaks in the range profile, by comparing the peak value with the noise floor. Specifically, the noise floor of the range-profile, including the random noise and noise-like ISI/ICI, could be measured as
\begin{align}
P_{N}=\frac{1}{(N-p_{\max})(M-q_{\max})}\sum_{p=p_{\max}}^{N-1}\sum_{q=q_{\max}}^{M-1}|\mathcal{R}(p,q)|^2,
\end{align}
where $\mathcal{R}(p,q)$ is the range-Doppler response in \eqref{eq:RangeProfile}, $p_{\max}$ and $q_{\max}$ are the maximum indices in range and Doppler response beyond which there is no target or the reflected signal of targets becomes negligible. The detected targets are those peaks satisfying
\begin{align}
\mathcal{R}[\hat{p},\hat{q}]> \rho P_{N}, \label{eq:detctThreshold}
\end{align}
where $\rho>1$ is the adjustable detection threshold for balancing the detection probability and false alarm rate.

Among all the detectable targets, the targets with delay $\tau_l<T_{cp}$ do not suffer from any power degradation or contribute to the ISI/ICI. The reflected signal from those targets can be reconstructed and eliminated before moving the detecting window. Specifically, with the default detection range shown in Fig.~\ref{F:CP}(b), the $m$th OFDM symbol can be decomposed into two parts
\begin{align}
y_m[n]=y_{m}^{0}[n]+\bar{y}_{m}^{0}[n]+z[n], n=0,...,N-1, \label{eq:ymn2parts}
\end{align}
where $y_{m}^{0}[n]$ is the signal reflected from targets within the ISI-free range, $\bar{y}_{m}^{0}[n]$ is the remaining signal reflected from the long-range targets, and $Z[i]$ is the noise. Denote by $L_0$ the number of targets within the ISI-free range. According to \eqref{eq:ymn}, ${y}_{m}^{0}[n]$ can be written as
\begin{align}
y_m^{0}[n]=\sum_{l=1}^{L_0}\beta_{l,m}\sum_{k=0}^{N-1}d_{km}e^{j\frac{2\pi}{N}k(n+N_{cp}-\tau_lB)}. \label{eq:ym0n}
\end{align}

Before moving the detection window, the signal $y_m^{0}[n]$ needs to be eliminated. However, since the reflected signal from all the targets are coupled, and the delays might be off-grid, it is difficult to accurately estimate the complex effective path gains $\{\beta_{l,m}\}_{l=1}^{L}$, especially the phase. As a result, $y_m^{0}[n]$ can not be reconstructed accurately using the estimated multi-path parameters. Next, we aim to extract the discrete-time channel impulse response in the ISI-free range and reconstruct the signal $y_m^{0}[n]$ using finite impulse response (FIR) filters.

The coefficients of the FIR filter are obtained as,
\begin{align}
h\left[ p \right] =\frac{1}{N}R_m[p-N_{lag}]_Ne^{j\pi p},p=0,...,N_{cp}+N_{lag}-1, \label{eq:coeffs}
\end{align}
where the subscript means modulo $N$. $N_{lag}$ is introduced to obtain the non-causal part of the discrete-time impulse response arising from off-grid delays. The term $e^{j\pi p}$ is introduced to frequency-shift by $\pi$ since the frequency of subcarrier $0$ is actually $-B/2$.

Next, we obtain the discrete-time transmitted signal by putting $d_{km}$ into the OFDM modulator again,

\begin{align}
x_m\left[ n \right] =\sum_{k=0}^{N-1}{d_{km}e^{j\frac{2\pi}{N}k\left( n+N_{cp} \right)}}\label{eq:x[n]},
\end{align}

Then, the reflected signal from targets in the ISI-free range can be approximated by
\begin{equation}
\begin{aligned}
\tilde{y}_{m}^{0}[n]&=x_m\left[ n \right] *h\left[ n \right],
\\
\hat{y}_{m}^{0}[n]&=\tilde{y}_{m}^{0}[n+N_{lag}]\label{eq:rec_y_m[n]}.
\end{aligned}
\end{equation}

Next, we subtract $\hat{y}_{m}^{0}[n]$ from $y_m[n]$ to obtain the signal that excludes the short-range targets, as shown in \eqref{eq:ymn2parts}.

\subsection{Detection of Long-ranged Targets }
After detecting the short-ranged targets within the ISI-free range and eliminating the corresponding received signal, the received signal only contains the signal reflected by the targets beyond the ISI-free range, i.e., with delay $\tau_{l}>T_{cp}, l=L_0+1,...,L$. Hence, we can shift the sensing detection window by $N_{cp}$ for enhancing the sensing power of long-ranged targets, as shown in Fig.~\ref{F:movewindow}.
 \begin{figure}[htb]
\centering
\includegraphics[width=0.45\textwidth]{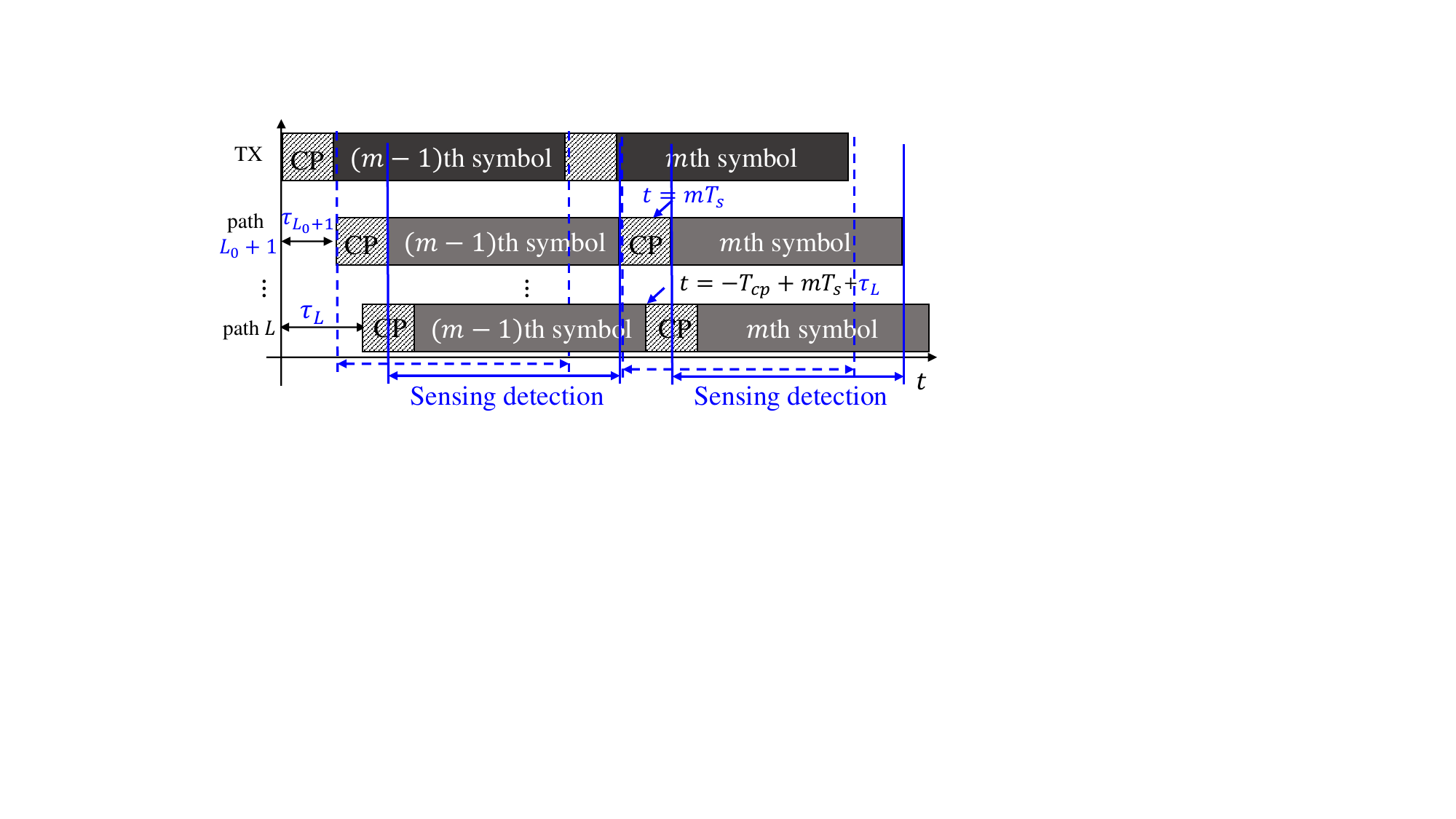}
\caption{Shift of sensing detection window (dashed line: original detection window; solid line: shifted detection window)}
\label{F:movewindow}
\end{figure}

After shifting the detection window by $T_{cp}$, those targets with delay $T_{cp}<\tau_l<2T_{cp}$ can be detected without power degradation by using the standard procedure shown in Fig.~\ref{F:SenseProcedures}. If there is no peak satisfying the detection threshold in \eqref{eq:detctThreshold}, we can shift the detection window by another $T_{cp}$ directly. Otherwise, we can repeat the procedures described in the preceding subsection for subtracting the signal reflected by those targets before shifting the detection window.

The detailed procedures of the proposed sliding window detection method are summarized in Algorithm~\ref{alg1}. Specifically, for the $m$th OFDM symbol, the sliding window with index $v$ starts from $t=mT_{s}+vT_{cp}$, where $m=0,...,M-1$ and $v=0,1,...$. The $v$th detection window maximizes the SINR for the targets whose path delay satisfies $vT_{cp}\leq\tau<(v+1)T_{cp}$. The detected targets within the specified range is labeled with index $L_{v-1}+1,...,L_{v}$, with $L_{-1}=0$. Before shifting to the $(v+1)$th detecting window, the signal reflected by all the previous targets with $\tau<(v+1)T_{cp}$ needs to be eliminated from the received signal. The procedures repeat until a predefined maximum detection range $d_{\max}$ is reached, i.e.,
\begin{align}
(v+1)T_{cp}<\frac{2d_{\max}}{c}, \label{eq:iterGo}
\end{align}
where the maximum detection range $d_{\max}$ should satisfy $d_{\mathrm{cp}}^{\max}<d_{\max}<d_{\mathrm{un}}^{\max}$.

With the iterative detection algorithm shown in Algorithm~\ref{alg1}, the power degradation due to the insufficient CP length can be fully eliminated, which changes the SINR in range-Doppler profile from \eqref{eq:SINRsum2} to
 \begin{align}
 \gamma_{\mathrm{sw}}(d)=\frac{M}{\frac{1}{\gamma_1(d)}+\frac{I_t}{N}},\label{eq:SINRnoPowerDegradation}
 \end{align}
 where $\gamma_1(d)$ is the received SNR without interference. When the number of subcarriers $N$ is sufficiently large, the resultant ISI/ICI in the sensing profile is negligible and the sensing performance only depends on the received signal strength and the number of OFDM symbols used for sensing.

\begin{algorithm}[htb]
	\caption{Sliding window sensing detection for OFDM-ISAC.}
	\begin{algorithmic}[1]
		\STATE{\textbf{Input} }: Sensing received signal $\{y[n],n=-N_{cp},-N_{cp}+1,...,M(N+N_{cp})-1.\}$.
        \STATE{\textbf{Initialize} }: Sliding window index $v=0$
        \WHILE{\eqref{eq:iterGo} is satisfied}
        \STATE{Group the received signal into OFDM symbols as $y_m[n]=y[n+m(N+N_{cp})+vN_{cp}],m=0,...,M-1$}
        \STATE{Perform OFDM demodulation to obtain $Y_m[i]$.}
        \STATE{Remove the communication data to obtain $F_{m}[i]$.}
        \STATE{Obtain the range profile on each OFDM symbol $R_m[p]$.}
        \STATE{Obtain the range-Doppler profile $\mathcal{R}[p,q]$.}
        \STATE{Identify the peaks within the interest range $p\in [0,T_{cp}B]$ based on the threshold defined in \eqref{eq:detctThreshold}, and add $vT_{cp}$ to the detected peak indices.}
        \STATE{$v=v+1$.}
        \IF{\eqref{eq:iterGo} is still satisfied}
        \STATE{Calculate the FIR coefficients according to \eqref{eq:coeffs}.\label{line:start}}
        \STATE{Pass the transmit signal $x_m[n]$ in \eqref{eq:x[n]} through the FIR filter to obtain the reflected signal from targets in the ISI-free range $\hat{y}_{m}^{0}[n]$.\label{line:end}}
        \STATE{Subtract $\hat{y}_{m}^{0}[n]$ from $y_m[n]$.}
        \ENDIF
        \ENDWHILE
	\end{algorithmic}
	\label{alg1}
\end{algorithm}

\subsection{Complexity Analysis}
To enhance the sensing performance of the long-ranged targets beyond the CP limitation, the proposed sliding window detection algorithm adopts iterative detection procedures, which increases the computational complexity as compared with the conventional OFDM-ISAC sensing receiver. Specifically, the iteration number is obtain by solving \eqref{eq:iterGo}, which renders
\begin{align}
v_{\max}=\left\lfloor\frac{2d_{\max}}{cT_{cp}}-1\right\rfloor. \label{eq:vmax}
\end{align}
If $v_{\max}=0$, the proposed algorithm reduces to the conventional OFDM-ISAC sensing procedures as shown in Fig.~\ref{F:SenseProcedures}. Otherwise, we repeat the sensing procedures for $v_{\max}+1$ times. The signal reconstruction procedure in Line~\ref{line:start} and Line~\ref{line:end} mainly involves FIR filtering, which is relatively low-complexity and can be further accelerated with FFT \cite{stockham1966high}. Hence, the overall computational complexity of the sliding window detection algorithm is approximately $v_{\max}+1$ times of the conventional OFDM-ISAC sensing receiver.

\section{Numerical Results}\label{sec:numerical}
In this section, we verify the analytical sensing SINR when the targets are beyond the ISI-free distance and validate the proposed sliding window detection algorithm with numerical results.

For 5G NR \cite{3GPP_TS_38_211}, the frame length is fixed to 10ms, containing 10 subframes, each with a length of 1ms. Consider the mmWave frequency range centered at $f_c=24$GHz with subcarrier spacing $\Delta f=120$kHz. Each subframe is divided into $8$ slots with length $0.125$ms, and each slot contains 14 OFDM symbols. Each OFDM symbol lasts $T_s=8.92 \mu s$, including symbol duration $T=8.33 \mu s$ and CP length $T_{cp}=0.59 \mu s$. To achieve high range resolution, assume that $N=2048$ subcarriers and $M=14$ symbols are used for sensing. The simulation settings are summarized in Table~\ref{tb:para}, based on which the ISI-free distance is calculated to be $d_{\mathrm{cp}}^{\max}=88.44$m. All the targets are assumed to have the same RCS and zero Doppler, i.e., with $\kappa=3.5$ and $f_{D,l}=0$.

\begin{table}[htb]
\caption{Parameter settings for simulation}
\centering
\label{tb:para}
\begin{tabular}{|c|c|c|c|c|}
\hline
\textbf{Symbols} & \textbf{Value}  \\
\hline
Carrier frequency $f_c$ & $24$ GHz\\
\hline
Subcarrier spacing $\Delta f$ & $120$ kHz\\
\hline
Number of subcarriers $N$ & 2048 \\
\hline
Bandwidth  & $245.76$ MHz \\
\hline
Symbol duration $T$ & $8.33\ \mu s$\\
\hline
CP length $T_{cp}$ & $0.59\ \mu s$ \\
\hline
Number of OFDM symbols in CPI $M$ & $14$\\
\hline
Data modulation order & $16$ QAM\\
\hline
Transmit power $P_T$ & $0.1$ W\\
\hline
Transmitter antenna gain $G_T$ & $20$ dB\\
\hline
Receiver antenna gain $G_R$ & $20$ dB\\
\hline
Noise figure $\mathcal{N}_F$ & $2.9$ dB\\
\hline
Reference temperature $T_{\mathrm{temp}}$ & $290$ K\\
\hline
\end{tabular}
\end{table}

\subsection{Impact of CP Limitation on Sensing Performance}
First, we consider the sensing of a close-range target with distance $d=30.50$m, which is within the ISI-free range. The expected received power calculated according to \eqref{eq:Prd} is $-64.97$dBm and the noise power is $P_{N}=N_{0}B=-87.17$dBm, as shown by the pink and red dashed lines in Fig.~\ref{F:RF}. With the sensing procedures shown in Fig.~\ref{F:SenseProcedures}, the sensing peak has a gain proportional to the number of subcarriers $N$ and the number of OFDM symbols $M$. With the sensing gain, the expected sensing power is increased by $10\log_{10}(MN)$ dB, which brings the sensing power to $-20.40$ dBm, as shown by the black dashed line in Fig.~\ref{F:RF}.

 \begin{figure}[htb]
\centering
\includegraphics[width=0.43\textwidth]{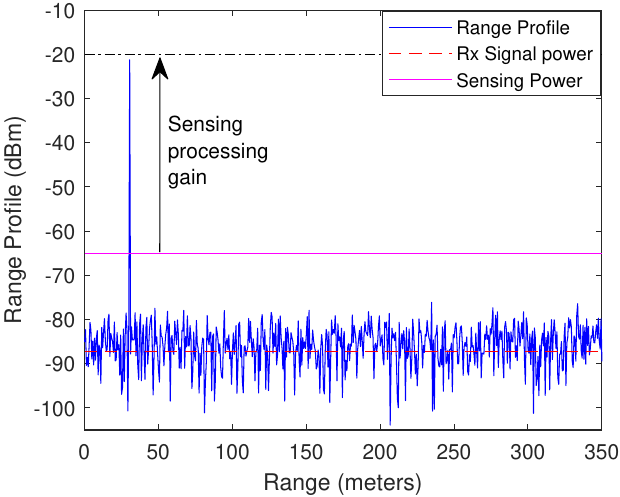}
\caption{Sensing profile for a target within the ISI-free range.}
\label{F:RF}
\end{figure}

Next, we consider the sensing of a single target beyond the ISI-free range, with $d=304.96$m, shown in Fig.~\ref{F:RF2}. According to \eqref{eq:SINRsum}. The expected signal power is degraded by the factor $(1-\frac{N_{\tau}-N_{cp}}{N})^2$, corresponding to 1.65 dB, and the expected ISI and ICI in the range profile is about $-109.96$ dBm, which is negligible as compared with the noise level. Hence, the observed noise level is similar to the case in Fig.~\ref{F:RF}, which is $P_N=-87.17$ dBm. On the other hand, the received signal $y[n]$ has the average power -104.97 dBm at $d=304.96$m and the sensing power is equivalent to  -62.05 dBm after considering sensing processing gain and power degradation due to incomplete OFDM symbol. Fig.~\ref{F:RF2} shows that the simulation is consistent with the analysis.
 \begin{figure}[htb]
\centering
\includegraphics[width=0.43\textwidth]{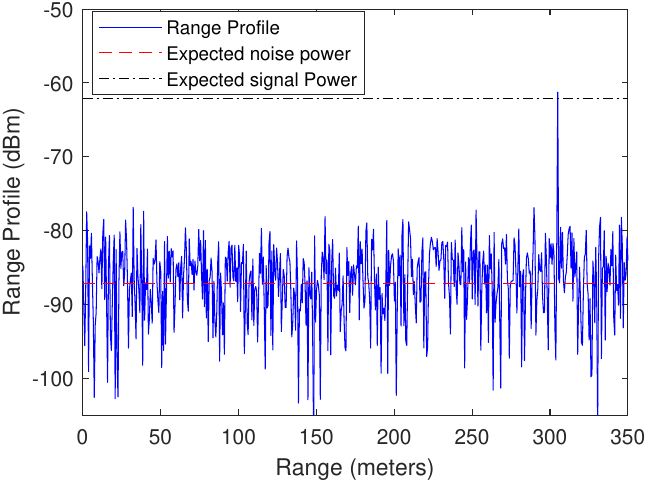}
\caption{Sensing profile for a target beyond the ISI-free range.}
\label{F:RF2}
\end{figure}

When multiple targets coexist, as shown in Fig.~\ref{F:RP}(a), the resultant ISI and ICI from targets beyond the ISI-free range adds up, which may lift the noise level up. However, according to \eqref{eq:Itotal}, the total ISI/ICI resultant by all the 6 targets beyond the ISI-free distance is only $-98$dBm, which slightly lifts the noise level from $-87.17$dBm to $-86.83$dBm. In general, when the number of subcarriers is sufficiently large, the impact of ISI and ICI caused by CP limitation is negligible for sensing. According to the specifications in 5G NR, the bandwidth allocated for each user ranges from 34.56MHz to 396MHz when the subcarrier spacing is 120kHz, which corresponds to 288 to 3300 subcarriers. If the number of subcarriers in the simulation reduces from 2048 to 288 in Fig.~\ref{F:RP}(b), the interference power will increase to $-97.86$dBm, which is still negligible as compared with the noise level.
\begin{figure}[htb]
\centering
\begin{subfigure}{0.40\textwidth}
\centering
\includegraphics[width=\textwidth]{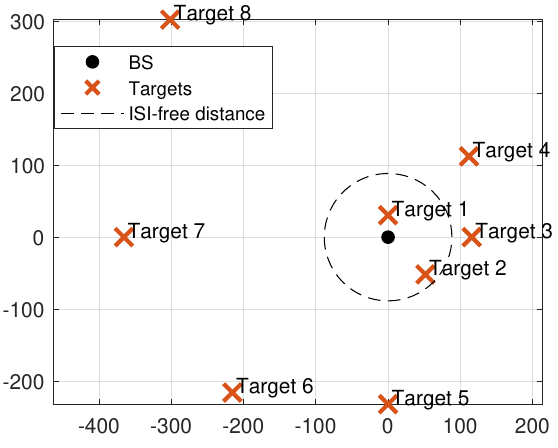}
\caption{Target distribution}
\end{subfigure}
\begin{subfigure}{0.43\textwidth}
\centering
\includegraphics[width=\textwidth]{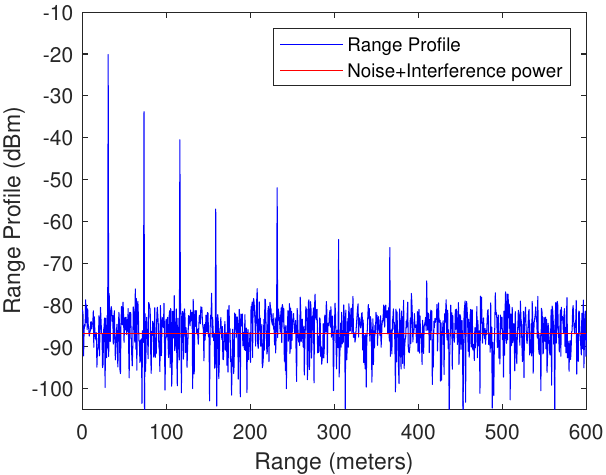}
\caption{Range profile with ISI/ICI}
\end{subfigure}
\caption{The sensing of multiple targets.}
\label{F:RP}
\end{figure}

Consider a sensing detection threshold $\rho$ in \eqref{eq:detctThreshold}, the maximum sensing range of an OFDM-ISAC system can be obtained by solving $\gamma(d)=\rho$, since the sensing SINR $\gamma(d)$ is a monotonically decreasing function of $d$. Let $\rho=10$, the maximum sensing range is calculated to be 610m for the default CP length in Table~\ref{tb:para}, which is much larger than the ISI free range 88.44m.   Fig.~\ref{F:maxRange} plots the change of the maximum sensing range with CP length. It is observed that if the CP is totally removed, i.e., $T_{cp}=0$, the maximum sensing range can still reach about 590m for $M=14$. Besides, the maximum sensing range increases from 610m to 800m when the CP length increases from $0.59 \mu s$ to $5.3 \mu s$, due to the reduced power degradation and ISI/ICI. For $d>800$m, the received signal strength is too weak as compared with the noise level, and hence further increasing $T_{cp}$ cannot bring the SINR above $\rho$. Although increasing $T_{cp}$ can improve the maximum sensing range, it cause severe degradation in the spectral efficiency, which is defined as $\eta=\frac{T}{T+T_{cp}}$ . For example, when CP length increases from $0.59 \mu s$ to $5.3 \mu s$, the spectral efficiency drops from 0.9339 to 0.6112. In most of the scenarios, increasing CP length for enhancing the sensing range should not be considered as an effective approach, since it compromises the communication efficiency. Using more OFDM symbols for sensing, i.e., with larger $M$, can also extend the maximum sensing due to the increased sensing processing gain, as shown in Fig.~\ref{F:maxRange}.
 \begin{figure}[htb]
\centering
\includegraphics[width=0.40\textwidth]{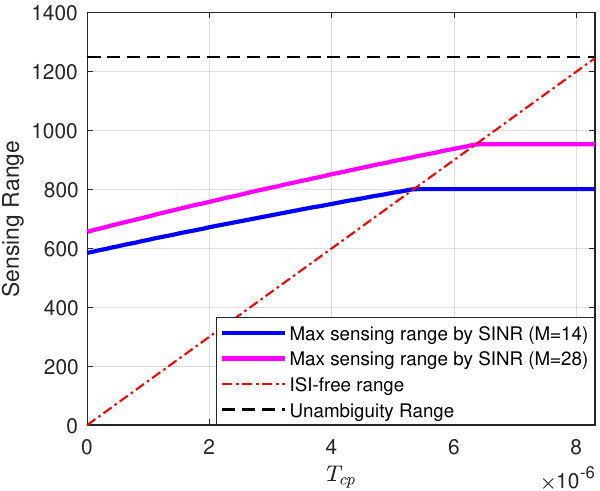}
\caption{The maximum sensing range of OFDM-ISAC.}
\label{F:maxRange}
\end{figure}

\subsection{Performance of the Sliding Window Detection}
Based on the analysis and numerical results presented in the preceding sections, we note that  the sensing of long-ranged targets is mainly affected by the received signal strength, instead of the resultant ISI/ICI. For sensing of the long-ranged target, the transmit power is assumed to be $P_t=1 $W in this subsection. Given the fixed received signal strength, the power degradation due to the insufficient CP could be handled by the proposed sliding window detection algorithm. Fig.~\ref{F:SINRSW} compares the analytical sensing SINR for a single target with the conventional OFDM sensing receiver and the proposed sliding window detection method. Consider a detection threshold $\rho=10$, the maximum sensing range of the conventional OFDM sensing receiver is 870 m. In contrast, the proposed sliding window detection method can detect the target up to the unambiguous range, with effective sensing range extension more than 300 m. Next, we use an example to illustrate the extension of the sensing range.

\begin{figure}[htb]
\centering
\includegraphics[width=0.43\textwidth]{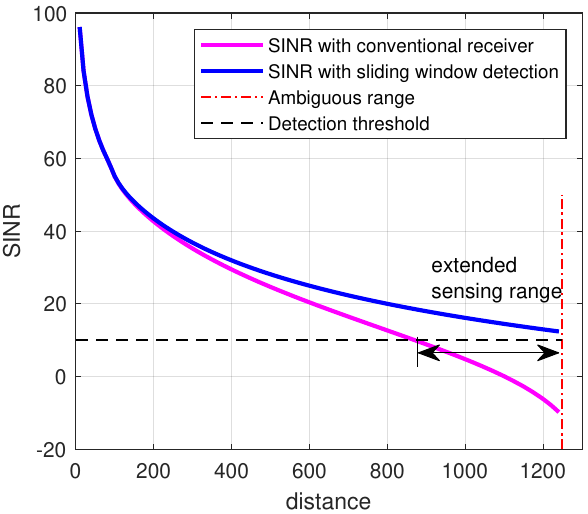}
\caption{The comparison of the sensing SINR and maximum sensing range.}
\label{F:SINRSW}
\end{figure}

Consider the sensing of two targets, one at short-range of $30.5\mathrm{m}$ and the other at long-range of $1219.86\mathrm{m}$.  The expected power in the range profile without degradation can be calculated as $10\log_{10} \left( P_R \right) +10\log_{10} \left( MN \right) $, which are $-10.4\mathrm{dBm}$ and $-74.5\mathrm{dBm}$, respectively. The range profile obtained by the conventional OFDM-radar receiver  is shown in Fig.~\ref{F:SlidingWindow_before}. The first target is within the ISI-free range and thus experiences no power degradation. The power of the second target is degraded by $20.52\mathrm{dB}$ to $-95\mathrm{dBm}$ according to \eqref{eq:SINRsum}, which is below the noise floor, thus not directly detectable. Due to the presence of the short-ranged target, the coherent compensation algorithm proposed in \cite{Wang2023} cannot be applied to enhance the sensing of long-ranged target, since the short-ranged target will cause severe ICI and ISI if the detection window shifts beyond its CP protection duration.
\begin{figure}[htb]
\centering
\includegraphics[width=0.48\textwidth]{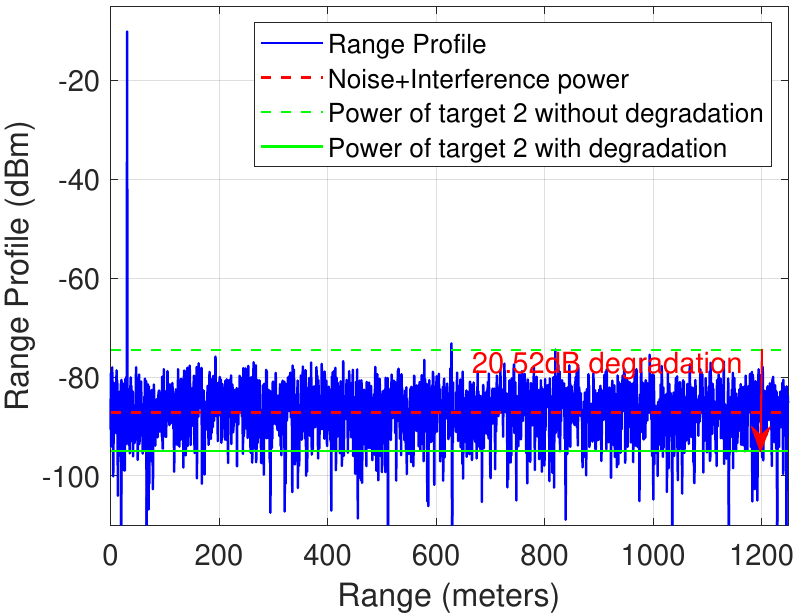}
\caption{The range profile before any cancellation.}
\label{F:SlidingWindow_before}
\end{figure}

With the proposed sliding window detection algorithm, the signal of the short-ranged is first reconstructed and subtracted from the received signal to enhance the detection of long-ranged target. Fig.~\ref{F:SlidingWindow}(a) depicts the range profile of the reconstructed signal for the first sliding window, and Fig.~\ref{F:SlidingWindow}(b) shows the range profile after eliminating the reflected signal from the first sliding window. It can be observed that signal within the ISI-free range is accurately reconstructed and eliminated from the received signal, without causing any obvious noise amplification.
\begin{figure}[htb]
\centering
\begin{subfigure}{0.40\textwidth}
\centering
\includegraphics[width=\textwidth]{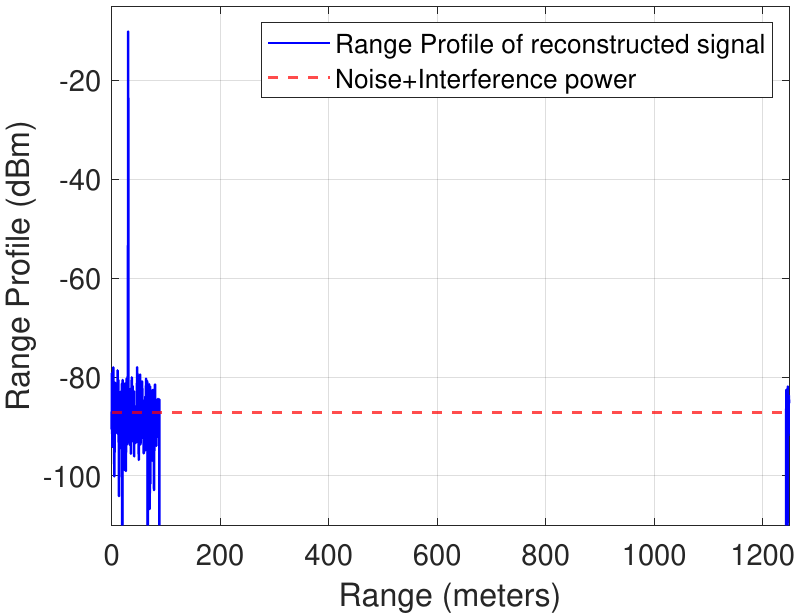}
\caption{Reconstruction}
\end{subfigure}
\begin{subfigure}{0.43\textwidth}
\centering
\includegraphics[width=\textwidth]{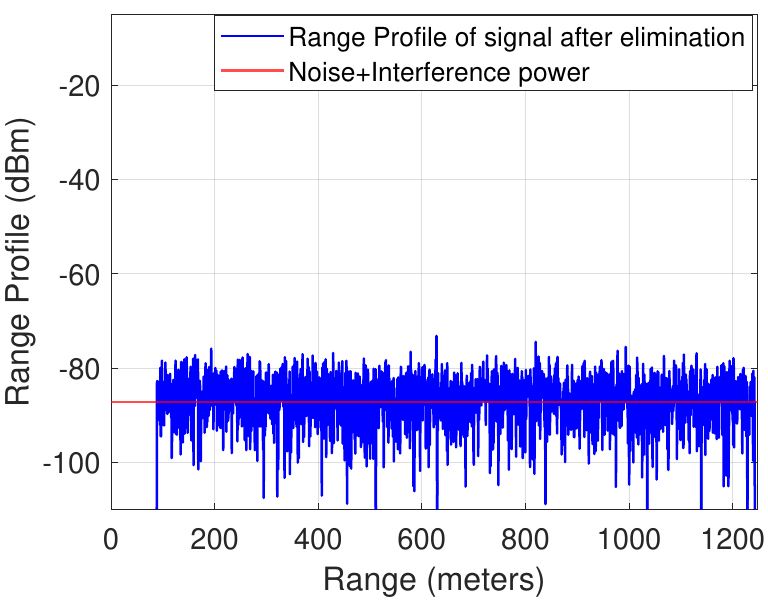}
\caption{Elimination}
\end{subfigure}
\caption{The reconstruction and elimination of the short-ranged target signal.}
\label{F:SlidingWindow}
\end{figure}

%

Fig.~\ref{F:SlidingWindow_window14} shows the range profile after 13 successive cancellations with the proposed sliding window detection algorithm, where the second target is within the ISI-free range. It can be observed that in this range profile, the second target experiences no power degradation and the sensing power is brought back to $-74.5$dBm, which is above the noise and interference level. Hence, it can be successfully detected.
\begin{figure}[htb]
\centering
\includegraphics[width=0.42\textwidth]{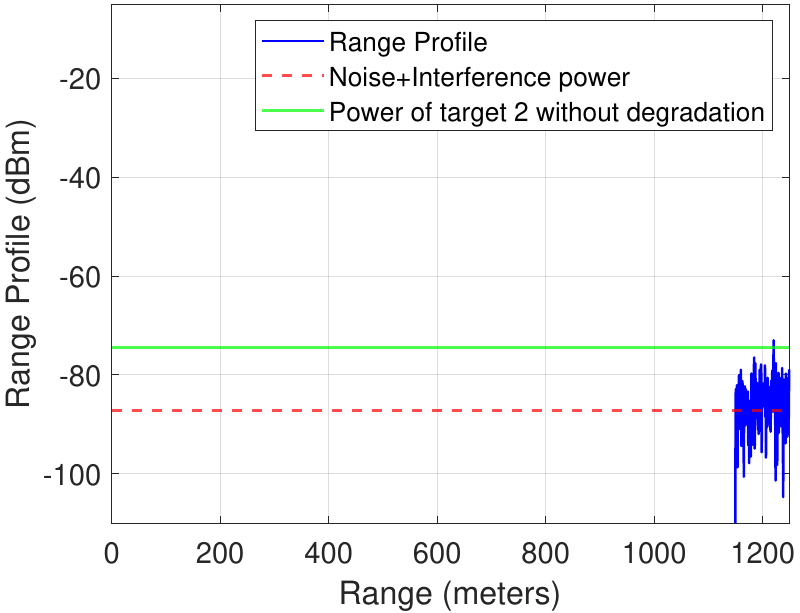}
\caption{The range profile after 13 successive eliminations.}
\label{F:SlidingWindow_window14}
\end{figure}

\section{Conclusion}\label{sec:conclusion}
In this paper, we have analyzed the impact of insufficient CP length within an OFDM-based ISAC system. Rather than concentrating solely on ISI and ICI in the received signal, our focus lies on their effect on the range profile for target detection. Our analysis reveals that the data-removal process at the OFDM-ISAC receiver introduces randomness to the ISI and ICI, significantly attenuating their power during the sensing process. The SINR of the range profile is explicitly formulated as a function of CP length and target distance. Both the theoretical analysis by the SINR expression and numerical examples demonstrate that the maximum sensing range of OFDM-ISAC can greatly exceed the ISI-free distance. To further enhance long-range target detection, we propose a sliding window detection method. This method reconstructs and eliminates signals from short-range targets before shifting the detection window. By employing this shifted detection window, the power degradation caused by an insufficient CP length can be effectively mitigated. In practical system design, the maximum sensing range should be determined based on the desired sensing accuracy, as indicated by the SINR in the sensing profile. Our result shows that it is unnecessary to redesign the CP length of existing OFDM modulation standards to accommodate sensing services.


\bibliographystyle{ieeetr}
\bibliography{IEEEabrv,ISACTutorial}

\end{document}